\newtheorem{assumption}{Assumption}
\newcommand{\eproof}{\hfill $\Box$}
\newcommand{\diverge}{\to\infty}
\newcommand{\ones}{\mathbf 1}
\newcommand{\reals}{{\mathbb{R}}}
\newcommand{\naturals}{{\mathbb{N}}}
\newcommand{\expect}[1]{\mathbb{E}\left[ #1 \right]}
\newcommand{\toas}{\xrightarrow{{\rm a.s.}}}
\newcommand{\pth}[1]{\left( #1 \right)}
\newcommand{\qth}[1]{\left[ #1 \right]}
\newcommand{\sth}[1]{\left\{ #1 \right\}}
\newcommand{\calC}{{\mathcal{C}}}
\newcommand{\calD}{{\mathcal{D}}}
\newcommand{\calE}{{\mathcal{E}}}
\newcommand{\calH}{{\mathcal{H}}}
\newcommand{\calI}{{\mathcal{I}}}
\newcommand{\calK}{{\mathcal{K}}}
\newcommand{\calL}{{\mathcal{L}}}
\newcommand{\calN}{{\mathcal{N}}}
\newcommand{\calR}{{\mathcal{R}}}
\newcommand{\calS}{{\mathcal{S}}}
\newcommand{\calV}{{\mathcal{V}}}
\begin{document}

\title{Asynchronous Distributed Hypothesis Testing\\ in the Presence of Crash Failures
\thanks{This research is supported in part by National Science Foundation award NSF 1329681. 
Any opinions, findings, and conclusions or recommendations expressed here are those of the authors
and do not necessarily reflect the views of the funding agencies or the U.S. government.}}

\author{Lili Su and Nitin H. Vaidya}
\institute{Department of Electrical and Computer Engineering, and\\
Coordinated Science Laboratory\\
University of Illinois at Urbana-Champaign\\
Email:\{lilisu3, nhv\} @illinois.edu}
\maketitle

\begin{center}
Technical Report\\
~\\

\today
\end{center}
~

\centerline{\bf Abstract}

This paper addresses the problem of distributed hypothesis testing in multi-agent networks, where
 agents repeatedly collect local observations about an {\em unknown} state of the world, and try to collaboratively detect the true state through information exchange.
 We focus on the impact of failures and asynchrony --  two fundamental factors in distributed systems -- on the performance of  consensus-based non-Bayesian learning. In particular, we consider the scenario where the networked agents may suffer crash faults, and messages delay can be {\em arbitrarily} long but finite.
We identify the minimal global detectability of the network for non-Bayesian rule to succeed. In addition, we obtain a generalization of a celebrated result by Wolfowitz and Hajnal to submatrices, which might be of independent interest.

\section{Introduction}
\label{intro}
Decentralized hypothesis testing is an important component of many decision-making and learning algorithms for large-scale systems, and thus has received significant amount of attention \cite{chamberland2003decentralized,gale2003bayesian,jadbabaie2012non,Tsitsiklis1988,tsitsiklis1993decentralized,varshney2012distributed,wong2012stochastic}. 
The traditional decentralized detection framework consists of a collection of spatially distributed sensors/agents and a fusion center \cite{Tsitsiklis1988,tsitsiklis1993decentralized,varshney2012distributed}. The sensors/agents independently collect {\em noisy} observations of the environment state, and send only {\em summary} of the private observations to the fusion center, where a final decision is made. 
In the case when the sensors/agents directly send all the private observations, the detection problem can be solved using a centralized scheme.
However, the above framework does not scale well, since  each sensor needs to be connected to the fusion center and full reliability of the fusion center is assumed, which may not be practical as the system scales.
Distributed hypothesis testing in the absence of fusion center is introduced by Gale and Kariv \cite{gale2003bayesian} in the context of social learning, where fully Bayesian belief update rule is studied.
 Bayesian update rule is impractical in many applications due to memory and computation constraints of each agent, and the inter-agent coordination challenges. 

Non-Bayesian learning rule is first proposed by Jadbabaie et al. \cite{jadbabaie2012non} in the general setting where external signals are observed during each iteration of the algorithm execution. Specifically, the belief of each agent is repeatedly updated as the arithmetic mean of its local Bayesian update and the beliefs of its neighbors, combing iterative consensus algorithm with local Bayesian update. It is shown \cite{jadbabaie2012non} that, under this learning rule,
each agent learns the true state almost surely.
%
Since the publication of \cite{jadbabaie2012non}, significant efforts have been devoted to designing and analyzing non-Bayesian learning rules with a particular focus on refining the fusion strategies and analyzing the (asymptotic and/or finite time) convergence rates of the refined algorithms \cite{jadbabaie2013information,nedic2014nonasymptotic,rad2010distributed,shahrampour2013exponentially,Lalitha2014,shahrampour2015finite,shahrampour2014distributed,molavi2015foundations}.

Among the various proposed fusion rules,  
in this paper we are particularly interested in the log-linear form of the update rule, in which, essentially, each agent updates its belief as the geometric average of the local Bayesian update and its neighbors' beliefs \cite{rad2010distributed}.
The log-linear form (geometric averaging) update rule is proposed in \cite{rad2010distributed}, and is shown to converge exponentially fast \cite{jadbabaie2013information,shahrampour2013exponentially}.
Taking an axiomatic approach, the geometric averaging fusion is shown to be  optimal \cite{molavi2015foundations}.
An optimization-based interpretation of this rule is presented in
\cite{shahrampour2013exponentially}, using dual averaging method with properly chosen proximal functions.
 Finite-time convergence rates are investigated independently in \cite{nedic2014nonasymptotic,Lalitha2014,shahrampour2014distributed}.
 Both \cite{nedic2014nonasymptotic} and \cite{shahrampour2015finite} consider time-varying networks, with slightly different network models. Specifically, \cite{nedic2014nonasymptotic} assumes that the union of every consecutive $B$ networks is strongly connected, while \cite{shahrampour2015finite} considers random networks.
In this paper, for ease of exposition, we assume that the network topology is static. As can be seen later, our results can be easily generalized to time-varying networks.

All the above work implicitly assumes synchronous systems and reliable agents.
However, in the context of decentralized hypothesis testing, asynchrony (message asynchrony and/or computation asynchrony) and failures (link failures and/or agent failures) --  two fundamental factors in practical distributed systems -- have not been addressed yet. In this paper, we consider the scenario where some agents may suffer crash faults (i.e., cease operating), and messages delay can be {\em arbitrarily} long but finite \cite{Lynch:1996:DA:525656}. This model is commonly considered in distributed computing, and the main challenge is that when an agent $i$ does not receive messages from the incoming neighbor agent $j$ within some timeout interval, from agent $i$'s perspective, it is not possible to distinguish whether agent $j$ has crashed, or the messages from agent $j$ are delayed.

We identify the minimal global detectability of the network for non-Bayesian rule to succeed. In addition, we obtain a generalization of a celebrated result by Wolfowitz and Hajnal to submatrices, which might be of independent interest.

The rest of the paper is organized as follows. Section \ref{prob formulation} presents the problem formulation and the learning rule.
Section \ref{matrix representation} introduces the notion of pseudo-belief vector, whose evolution captures the dynamics of the agents' beliefs, and admits a simple matrix representation. The limiting behavior of the product of the update matrices (the matrices in the matrix representation introduced in Section \ref{matrix representation}) is investigated in Section \ref{sec: consensus}. Also in Section \ref{sec: consensus}, we generalize a celebrated result by Wolfowitz and Hajnal.  The
necessary and sufficient network detectability is characterized in Section \ref{sec: almost sure}.

\section{Problem Formulation}
\label{prob formulation}
\paragraph{Network Model:}
We consider an asynchronous system, where the message delay can be {\em arbitrarily} long but finite.
A collection of $n$ agents are connected by a {\em directed} network $G(\calV, \calE)$, where $\calV=\{1, \ldots, n\}$ and $\calE$ is the collection of {\em directed} edges. For each $i\in \calV$, let $\calI_i$ denote the set of incoming neighbors of agent $i$. In any execution, up to $f$ agents suffer crash faults (i.e., cease operating). Let $\calN\subseteq \calV$ be the set of agents that operate correclty during a given execution (i.e., they are non-faulty). Note that $|\calV-\calN|\le f$, since at most $f$ agents may suffer crash failure.
As noted earlier, although we assume a static network topology, our results can be easily generalized to time-varying networks. Throughout this paper, we use the terms agent and node interchangeably.

\paragraph*{Observation Model:}
Let $\Theta=\{\theta_1, \theta_2, \ldots, \theta_m\}$ denote a finite set of $m$ environmental states, which we call {\em hypotheses}.
We consider asynchronous iterations. 
In the $t$-th iteration, each agent {\em independently} obtains
private signal about the environmental state $\theta^*$, which is initially unknown to every agent in the network.
For ease of exposition, we assume that if multiple signals are observed, only one signal is used to update beliefs.
Each agent $i$ knows the structure of its private signal, which is represented by a collection of parameterized distributions $\calD^i=\{\ell_i(w_i | \theta)| \theta\in \Theta,\, w_i\in \calS_i\}$,
where $\ell_i(\cdot | \theta)$ is a distribution with parameter $\theta\in \Theta$, and $\calS_i$ is the finite private signal space. For each $\theta \in \Theta$, and each $i\in \calV$, the support of  $\ell_i(\cdot|\theta)$ is the whole signal space, i.e., $\ell_i(w_i|\theta)>0$, $\forall\, w_i\in \calS_i$ and $\forall\, \theta \in \Theta$.
Precisely, let $s_t^i$ be the private signal observed by agent $i$ in iteration $t$, and let ${\bf s}_t=\{s_t^1, s_t^2, \ldots, s_t^n\}$ be the signal profile at time $t$ (i.e., signals observed by the agents in iteration $t$). Given an environmental state $\theta$, the signal profile ${\bf s}_t$ is generated according to the joint distribution $\ell_1(\cdot|\theta)\times \ell_2(\cdot|\theta)\times \cdots \times \ell_n(\cdot|\theta)$.
The goal is to have all the non-faulty agents (agents in $\calN$) collaboratively learn the environmental
state $\theta^*$. 

\paragraph{Belief Update Rule:}
Each agent $i$ keeps a belief $\mu^i$, which is a distribution over the set $\Theta$, with $\mu^i(\theta)$ being the probability with which the agent $i$ {\em believes} that $\theta$ is the true environmental state. Since no signals are observed before the execution of an algorithm, the belief $\mu^i$ is often initially set to be uniform over the set $\Theta$, i.e.,
$\pth{\mu_0^i(\theta_1), \mu_0^i(\theta_1), \ldots, \mu_0^i(\theta_m)}^T=\pth{\frac{1}{m}, \ldots, \frac{1}{m}}^T$. \footnote{In this paper, every vector considered is column vector.}
In this work, we also adopt the above convention. (For our results to hold, it suffices to have $\mu_0^i(\theta)>0$ for each $\theta\in \Theta$ and each $i\in \calV$.)

In our algorithm,
we will use a geometric averaging update rule that has been investigated in previous work
\cite{nedic2014nonasymptotic,rad2010distributed,Lalitha2014,shahrampour2014distributed}. 
Let $\calN[t]$ be the set of agents that have not crashed {\em by the beginning} of iteration $t$, and let $\bar{\calN}[t]$ be the set of agents that have not crashed {\em by the end} of iteration $t$. Note that $\calN[t+1]\subseteq \calN[t]$,  $\calN[t]-\bar{\calN}[t]$ is the collection of agents that crash {\em during} iteration $t$, and that $\bar{\calN}[t]=\calN[t+1]$. In addition, $\lim_{t\diverge} \calN[t]=\calN=\lim_{t\diverge}\bar{\calN}[t].$

For $t\ge 1$, the steps to be performed by agent $i\in \calN[t]$ in the $t$--th iteration are listed as follows, where messages are tagged with (asynchronous) iteration index.
%
%
%
%
%
%
%
%
\begin{enumerate}
\item {\em Transmit Step:} Transmit current belief vector $\mu_{t-1}^i$ on all outgoing edges.
\item {\em Receive Step:}  Wait until a private signal $s_t^i$ is observed and belief vectors are received from $|\calI_i|-f$ incoming neighbors. Let $\calR_i[t]$ be the set of incoming neighbors from whom agent $i$  receives these belief vectors.
\item {\em Update Step:}\footnote{In the notation $\mu_{t}^i(\theta)$, the superscript denotes the agents and subscript denotes iterations.} 
    For each $\theta\in \Theta$, update $\mu^i(\theta)$ as
\begin{align}
\label{learning update}
\mu_{t}^i(\theta)\triangleq  \frac{\ell_i(s_{t}^i|\theta)\prod_{j\in \calR_i[t]
\cup \{i\}}\mu_{t-1}^j(\theta)^{\frac{1}{|\calI_i|-f+1}}}{\sum_{p=1}^m \ell_i(s_{t}^i|\theta_p)\prod_{j\in \calR_i[t]
\cup \{i\}}\mu_{t-1}^j(\theta_p)^{\frac{1}{|\calI_i|-f+1}}}.
\end{align}
\end{enumerate}
%
%
%
%
Note that due to asynchrony and agent failures, $\calR_i[t]$ may change over time and {\em is not} monotone. In contrast, in synchronous systems, $\calR_i[t]$ is non-increasing, i.e., $\calR_i[t+1]\subseteq \calR_i[t]$ for any $t\ge 1$ and any $i\in \calV$.

In iteration $t$, if an agent crashes after performing the update step in (\ref{learning update}) for all $\theta\in \Theta$, without loss of generality, we say that this agent crashes in iteration $t+1$.
Note that each agent in $\calN[t]-\bar{\calN}[t]$  may crash at any time during iteration $t$. In particular, it may crash before performing \eqref{learning update} or while performing \eqref{learning update}.
The above protocol is different from the original geometric averaging learning rule \cite{nedic2014nonasymptotic,rad2010distributed,Lalitha2014,shahrampour2014distributed} in the receive step, where each agent waits to receive messages from $|\calI_i|-f$ incoming neighbors instead of waiting to hear from all of its incoming neighbors. This modification is necessary because of asynchrony and the possibility of up to $f$ agents crashes. Requiring an agent to receive messages from more than $|\calI_i|-f$ neighbors may lead to non-termination of the learning protocol (when $f$ incoming neighbors have crashed).

Recall that $\theta^*$ is the true state.
We say the networked agents collaboratively detect $\theta^*$ if for each non-faulty agent $i\in \calN$,
\begin{align}
\label{goal}
\lim_{t\diverge}\mu_t^i(\theta^*) ~=~ 1 \,\,\, a.s.
\end{align}

\section{Matrix Representation}
\label{matrix representation}
In this section, we define a matrix representation of the agents' belief update. In synchronous and reliable networks, the update matrix ${\bf A}[t]$ is often chosen to be the weighted adjacency matrix of the network \cite{nedic2014nonasymptotic,Lalitha2014,shahrampour2014distributed}.
However, the above matrix representation is improper in the presence of message asynchrony and agent failure, observing that transmitted messages may not be used due to message delay, and agents {\em not} in $\bar{\calN}[t]$ may not perform  \eqref{learning update} for all $\theta\in \Theta$  (and may not even observe new private signals).
To resolve the above complication, we introduce pseudo-belief vector $\tilde{\mu}^i$ for each $i\in \calV$.
The evolution of the pseudo-belief vectors admits a simple matrix representation, and captures the update of $\mu^i$ for each non-faulty agent $i\in \calN$.
For each $i\in \bar{\calN}[t]$, define
\begin{align}
\label{matrix 1}
{\bf A}_{ij}[t]\triangleq
\begin{cases}
\frac{1}{|\calI_i|-f+1}, & j\in \calR_i[t]\cup \{i\}\\
0, & \text{otherwise,}
\end{cases}
\end{align}
and for each $i\notin \bar{\calN}[t]$,
\begin{align}
\label{matrix 2}
{\bf A}_{ij}[t]\triangleq
\begin{cases}
1, & j=i\\
0, & \text{otherwise.}
\end{cases}
\end{align}
Note that in iteration $t$, an agent can only receive messages tagged with (asynchronous) iteration index $t$ from agents that have not crashed by the beginning of iteration $t$. Thus,
${\bf A}_{ij}[t]=0$ for each $j\notin \calN[t]$, and
$$
1=\sum_{j=1}^n {\bf A}_{ij}[t] = \sum_{j\in \calN[t]} {\bf A}_{ij}[t]+\sum_{j\notin \calN[t]} {\bf A}_{ij}[t]=\sum_{j\in \calN[t]} {\bf A}_{ij}[t].
$$

The pseudo-belief is defined as follows.
For each $i\in \bar{\calN}[t]$,
\begin{align}
\label{auc1}
\tilde{\mu}_t^i(\theta)\triangleq \frac{\ell_i(s_{t}^i|\theta)\prod_{j=1}^n \tilde{\mu}_{t-1}^j(\theta)^{{\bf A}_{ij}[t]}}{\sum_{p=1}^m \ell_i(s_{t}^i|\theta_p)\prod_{j=1}^n \tilde{\mu}_{t-1}^j(\theta_p)^{{\bf A}_{ij}[t]}}, ~\forall \, \theta\in \Theta ,
\end{align}
with $\tilde{\mu}_0^i(\theta)=\mu_0^i(\theta)=\frac{1}{m}$, and for each $i\notin \bar{\calN}[t]$,
\begin{align}
\label{auc2}
\tilde{\mu}_t^i(\theta)~\triangleq ~\frac{\prod_{j=1}^n \tilde{\mu}_{t-1}^j(\theta)^{{\bf A}_{ij}[t]}}{\sum_{p=1}^m\prod_{j=1}^n \tilde{\mu}_{t-1}^j(\theta_p)^{{\bf A}_{ij}[t]}}=\tilde{\mu}_{t-1}^i(\theta), ~\forall \, \theta\in \Theta ,
\end{align}
with $\tilde{\mu}_0^i(\theta)=\mu_0^i(\theta)=\frac{1}{m}$.
Note that, in contrast to \eqref{auc1}, no private signal is involved in \eqref{auc2}. In addition, \eqref{auc2} is equivalent to defining $\tilde{\mu}_t^i(\theta)=\tilde{\mu}_{t-1}^i(\theta)$
It is easy to see (by induction) that
$$\tilde{\mu}_t^i(\theta)=\mu_t^i(\theta), \, \forall\, i\in \bar{\calN}[t].$$
%
%
%
Note that $\tilde{\mu}^i(\theta)$ only captures the evolution of $\mu^i(\theta)$ for $i\in \bar{\calN}[t]$, i.e., each agent that has not crashed {\em at the end} of iteration $t$. Since an agent $i\in \calN[t]-\bar{\calN}[t]$ may crash during the update step (\ref{learning update}), $\tilde{\mu}^i(\theta)$ may not capture the real update performed by nodes in $\calN[t]-\bar{\calN}[t]$ for some subset of $\Theta$. This inconsistency does not affect the accuracy of our analysis. Intuitively, since the nodes in $\calN[t]-\bar{\calN}[t]$ are crashing away, they will not affect further system evolution.

For any $\theta\in \Theta$, and any $i\in \calV$, let
\begin{align}
\label{b1}
\bm{\psi}_{t}^i(\theta)\triangleq \log \frac{\tilde{\mu}_t^i(\theta)}{\tilde{\mu}_t^i(\theta^*)}.
\end{align}
Note that for each $i\in \calN$, $\mu_t^i(\theta^*)\toas 1$ if and only if $\bm{\psi}_{t}^i(\theta)\toas -\infty$ for $\theta\not=\theta^*$.
In addition, let $\calL_{t}(\theta)\in \reals^n$ such that
\begin{align}
\label{private signal}
\calL^i_{t}(\theta)~\triangleq~
\begin{cases}
\log \frac{\ell_i(s_{t}^i|\theta)}{\ell_i(s_{t}^i|\theta^*)}, &~ ~\forall \, i\in \bar{\calN}[t],\\
0, &~ \text{otherwise}.
\end{cases}
\end{align}
Then, for each $i\in \bar{\calN}[t]$, we have
\begin{align}
\label{rewrite1}
\nonumber
\bm{\psi}_{t}^i(\theta)&=\log \frac{\tilde{\mu}_{t}^i(\theta)}{\tilde{\mu}_{t}^i(\theta^*)}\\
\nonumber
&=\log \pth{ \frac{\pth{\prod_{j=1}^n\tilde{\mu}_{t-1}^j(\theta)^{{\bf A}_{ij}[t]}}\ell_i(s_{t}^i|\theta)}{\sum_{p=1}^m \prod_{j=1}^n\tilde{\mu}_{t-1}^j(\theta)^{{\bf A}_{ij}[t]}\ell_i(s_{t}^i|\theta)}\times \frac{\sum_{p=1}^m \prod_{j=1}^n\tilde{\mu}_{t-1}^j(\theta)^{{\bf A}_{ij}[t]}\ell_i(s_{t}^i|\theta)}{\pth{\prod_{j=1}^n\tilde{\mu}_{t-1}^j(\theta^*)^{{\bf A}_{ij}[t]}}\ell_i(s_{t}^i|\theta^*)}}\\
\nonumber
&=\log \pth{\prod_{j=1}^n \pth{\frac{\tilde{\mu}_{t-1}^j(\theta)}{\tilde{\mu}_{t-1}^j(\theta^*)}}^{{\bf A}_{ij}[t]}\times \frac{\ell_i(s_{t}^i|\theta)}{\ell_i(s_{t}^i|\theta^*)}}\\
\nonumber
&=\sum_{j=1}^n {\bf A}_{ij}[t] \log \frac{\tilde{\mu}_{t-1}^j(\theta)}{\tilde{\mu}_{t-1}^j(\theta^*)} +\log \frac{\ell_i(s_{t}^i|\theta)}{\ell_i(s_{t}^i|\theta^*)}\\
&=\sum_{j=1}^n {\bf A}_{ij}[t] \bm{\psi}_{t-1}^j(\theta) +\calL^i_{t}(\theta) ~~~\text{by \eqref{private signal}}
\end{align}
and for each $i\notin \bar{\calN}[t]$, we have
\begin{align}
\label{rewrite2}
\nonumber
\bm{\psi}_{t}^i(\theta)&=\log \frac{\tilde{\mu}_{t}^i(\theta)}{\tilde{\mu}_{t}^i(\theta^*)}\\
\nonumber
&=\log \pth{ \frac{\pth{\prod_{j=1}^n\tilde{\mu}_{t-1}^j(\theta)^{{\bf A}_{ij}[t]}}}{\sum_{p=1}^m \prod_{j=1}^n\tilde{\mu}_{t-1}^j(\theta)^{{\bf A}_{ij}[t]}}\times \frac{\sum_{p=1}^m \prod_{j=1}^n\tilde{\mu}_{t-1}^j(\theta)^{{\bf A}_{ij}[t]}}{\pth{\prod_{j=1}^n\tilde{\mu}_{t-1}^j(\theta^*)^{{\bf A}_{ij}[t]}}}}~~\text{by \eqref{auc2}}\\
\nonumber
&=\log \pth{\prod_{j=1}^n \pth{\frac{\tilde{\mu}_{t-1}^j(\theta)}{\tilde{\mu}_{t-1}^j(\theta^*)}}^{{\bf A}_{ij}[t]}}\\
\nonumber
&=\sum_{j=1}^n {\bf A}_{ij}[t] \log \frac{\tilde{\mu}_{t-1}^j(\theta)}{\tilde{\mu}_{t-1}^j(\theta^*)}\\
&=\sum_{j=1}^n {\bf A}_{ij}[t] \bm{\psi}_{t-1}^j(\theta)+\calL^i_{t}(\theta) ~~~\text{by \eqref{private signal}}
\end{align}
Let $\bm{\psi}_{t}(\theta)\in \reals^n$ be the vector that stacks $\bm{\psi}_t^i(\theta)$, with the $i$--th entry being $\bm{\psi}_t^i(\theta)$ for all $i\in \calV$.
The evolution of $\bm{\psi}(\theta)$ can be compactly written as
\begin{align}
\label{matrix form}
\bm{\psi}_{t}(\theta)&={\bf A}[t]\bm{\psi}_{t-1}(\theta)+\calL_{t}(\theta).
\end{align}
Expanding \eqref{matrix form}, we get
\begin{align}
\label{int1}
\nonumber
\bm{\psi}_{t}(\theta)&={\bf A}[t]\bm{\psi}_{t-1}(\theta)+\calL_{t}(\theta)\\
\nonumber
&={\bf A}[t]\pth{{\bf A}[t-1]\bm{\psi}_{t-2}(\theta)+\calL_{t-1}(\theta)}+\calL_{t}(\theta) \\
\nonumber
&\cdots \\
\nonumber
&={\bf A}[t]{\bf A}[t-1]\cdots {\bf A}[1] \bm{\psi}_0(\theta)+\sum_{r=1}^{t-1} {\bf A}[t]{\bf A}[t-1]\cdots {\bf A}[r+1] \calL_r(\theta)+\calL_{t}(\theta)\\
&={\bf \Phi}(t,1)\bm{\psi}_0(\theta)+\sum_{r=1}^{t} {\bf \Phi}(t,r+1)\calL_r(\theta),
\end{align}
where ${\bf \Phi}(t, r)={\bf A}[t]\cdots {\bf A}[r]$ for $r\le t+1$. By convention, ${\bf \Phi}(t, t)={\bf A}[t]$ and ${\bf \Phi}(t, t+1)={\bf I}$.

\section{Convergence of ${\bf \Phi}(t, r)$}
\label{sec: consensus}
In this section, we present the tight (necessary and sufficient) condition on $G(\calV, \calE)$ \cite{tseng2015thesis} such that for $i,j\in \calN$, the following holds.
\begin{align}
\label{r weak ergodic}
\lim_{t\diverge}~\left | {\bf \Phi}_{ik}(t,r)-{\bf \Phi}_{jk}(t,r)\right |~=~0.
\end{align}
Note that \eqref{r weak ergodic} is weaker than requiring ${\bf \Phi}(t, r)$ to be weakly ergodic, where  \eqref{r weak ergodic} needs to hold for any $i,j \in \calV$. If the infinite backward product $\lim_{t\diverge}{\bf \Phi}(t, r)$ satisfies \eqref{r weak ergodic}, we say the product is weakly ergodic restricted to indices in $\calN$.
\begin{definition}
\label{reduced graph}
A reduced graph $\calH$ of $G(\calV, \calE)$ is obtained by (i) removing up to $f$ incoming links for each $i\in \calV$;
(ii) in the obtained graph, removing up to $f$ sinks, if any. \footnote{For a given graph $H(\calV, \calE)$, a node $s\in \calV$ is called a sink if there is no outgoing links coming from node $s$.}.
\end{definition}

Let $\calC$ be the collection of all reduced graph of $G(\calV, \calE)$.
Since $|\calV|=n$ is finite, it holds that $\chi=|\calC|<\infty$.

\begin{definition}
Given a graph $G(\calV, \calE)$,  a source component in  $G(\calV, \calE)$ is the strongly connected component that does not have an incoming link from outside the component.
\end{definition}
\noindent\text{\bf Condition 1:} Every reduced graph of $G(\calV, \calE)$ contains a unique source component.
\vskip 0.5\baselineskip

Let $\calH \in \calC$ be an arbitrary reduced of $G(\calV, \calE)$ with source component $\calS_{\calH}$. Define
\begin{align}
\label{minimal source}
\gamma\triangleq \min_{\calH\in \calC} |\calS_{\calH}|,
\end{align}
i.e., $\gamma$ is the minimum source component size in all reduced graphs. Note that $\gamma\ge 1$ if Condition 1 holds.

We first show that Condition 1 is necessary for \eqref{r weak ergodic} to hold.
Note that if \eqref{r weak ergodic} holds, asynchronous crash consensus is achieved \cite{Lynch:1996:DA:525656}. Thus the necessary condition for asynchronous consensus is also necessary for \eqref{r weak ergodic}. The following tight (necessary and sufficient) condition of asynchronous crash consensus is found in \cite{tseng2015thesis}.
\vskip 0.4\baselineskip
\noindent\text{\bf Condition 2:} For any node partition $L, R, C$ such that $L\not=\O$ and $R\not=\O$, at least one of the following holds: \\
(1) there exists $i\in L$ such that $|\calI_i\cap \pth{R\cup C}|\ge f+1$; or \\
(2) there exists $j\in R$ such that $|\calI_j\cap \pth{L\cup C}|\ge f+1$.

\vskip 0.5\baselineskip

\begin{theorem}
\label{equivalence}
Condition 1 and Condition 2 are equivalent.
\end{theorem}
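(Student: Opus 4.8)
The plan is to prove the two implications separately, working directly from the combinatorial definitions. Both conditions quantify over a family of ``damaged'' versions of $G$: Condition~2 talks about partitions $(L,R,C)$ and the ability to route $f+1$ links across a cut, while Condition~1 talks about reduced graphs and the existence of a unique source component. The bridge between them is the observation that a reduced graph $\calH$ is exactly what you get by adversarially deleting up to $f$ incoming links at each node and then up to $f$ sinks, and that a source component of $\calH$ is a strongly connected set with no incoming edge from its complement \emph{in $\calH$}.

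\textbf{Condition~1 $\Rightarrow$ Condition~2 (contrapositive).} Suppose Condition~2 fails, so there is a partition $L,R,C$ with $L,R\neq\O$ such that every $i\in L$ has $|\calI_i\cap(R\cup C)|\le f$ and every $j\in R$ has $|\calI_j\cap(L\cup C)|\le f$. I would construct a reduced graph with at least two source components, contradicting Condition~1. For each $i\in L$, remove all (at most $f$) incoming links coming from $R\cup C$; for each $j\in R$, remove all (at most $f$) incoming links coming from $L\cup C$. In the resulting graph $G'$, there are no edges from $R\cup C$ into $L$ and none from $L\cup C$ into $R$; in particular $L$ and $R$ each receive no edges from outside themselves. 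Now I still need to kill the nodes of $C$ (which may have edges into $L$ or $R$ were it not already handled — actually edges $C\to L$ and $C\to R$ were removed, but edges within $C$ and $C$'s status as potential sources remain). The nodes in $C$ that become sinks can be removed in the ``remove up to $f$ sinks'' step, but $|C|$ may exceed $f$, so this needs care: I would instead argue that after the link deletions, $L$ and $R$ are unions of source components of $G'$ that are disjoint from each other (no path from one to the other), hence $G'$ — and any reduced graph refining it — cannot have a \emph{unique} source component. If $C$-removal is genuinely needed to expose two source components, one reroutes: since we only need the existence of \emph{some} reduced graph violating uniqueness, and $L,R$ already form an ``isolated'' pair, the strongly connected components inside $L$ with no incoming edges and those inside $R$ with no incoming edges are all source components and at least one lies in $L$, at least one in $R$; this already defeats uniqueness without touching $C$. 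The main subtlety is handling the sink-removal budget and making sure ``up to $f$ incoming links per node'' suffices — it does, precisely because the failure of Condition~2 bounds the relevant in-degrees by $f$.

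\textbf{Condition~2 $\Rightarrow$ Condition~1 (contrapositive).} Suppose Condition~1 fails: some reduced graph $\calH$ has either no source component or at least two, say $\calS_1,\calS_2$ are two distinct source components of $\calH$ (the no-source case is handled by a standard argument that a finite DAG-of-SCCs always has a source, or is subsumed). Set $L=\calS_1$, $R=\calS_2$, and $C=\calV\setminus(L\cup R)$. Since $\calS_1$ is a source component of $\calH$, no node of $L$ has an incoming edge in $\calH$ from outside $L$; but $\calH$ was obtained from $G$ by deleting at most $f$ incoming links per node (and some sinks, which only removes nodes/edges), so every $i\in L$ has $|\calI_i^G\cap(R\cup C)|\le |\calI_i^G\setminus L|\le f$ — here one must check that the sink-removal step doesn't create new incoming-edge budget issues, which it doesn't since removing a node only deletes edges. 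Symmetrically every $j\in R$ has $|\calI_j^G\cap(L\cup C)|\le f$. That is exactly the negation of Condition~2. The one gap to fill is the ``no source component at all'' case: I would note that any finite directed graph with at least one vertex has at least one source component (condense to the DAG of strong components, which has a source), so if $\calH$ is nonempty it has $\ge 1$; Condition~1's failure then really does force $\ge 2$. If $\calH$ is empty, that can only happen if $n\le f$, a degenerate regime one can exclude or treat trivially.

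\textbf{Main obstacle.} The delicate point in both directions is the interplay between the two deletion operations defining a reduced graph — ``remove $\le f$ incoming links per node'' and ``remove $\le f$ sinks'' — and matching them against Condition~2's single threshold $f$ on cut in-degrees. In particular, in the $\Rightarrow$ direction of the first implication, one must be sure that isolating $L$ and $R$ can be done within the per-node incoming-link budget \emph{and} that any leftover need to prune $C$ stays within the $f$-sink budget or can be avoided entirely; I expect the clean route is to avoid pruning $C$ by arguing directly about source components living inside $L$ and inside $R$. Everything else is bookkeeping about strongly connected components.
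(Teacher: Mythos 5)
Your proposal is correct and follows essentially the same route as the paper: one direction builds a reduced graph with two source components from a partition violating Condition~2 (by deleting the at most $f$ cross-cut incoming links at each node of $L$ and $R$ and observing that each of $L$ and $R$ must then contain a source component), and the other direction reads off a violating partition $L=\calS_1$, $R=\calS_2$ from two source components of a reduced graph. The only cosmetic difference is that the paper factors the second direction through the intermediate subgraph obtained before sink removal and then argues sinks cannot lie on paths from the source component, whereas you argue directly on the reduced graph that sink removal cannot account for any deleted incoming edge (a sink has no surviving outgoing links), so the per-node budget of $f$ still bounds $|\calI_i\cap(R\cup C)|$; both observations are valid, and your explicit treatment of the ``no source component'' case is a small bonus the paper leaves implicit.
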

Theorem \ref{equivalence} is proved in Appendix \ref{app:equivalence}.
\begin{remark}
Let $\calH^{\prime}$ be the subgraph of $G(\calV, \calE)$ obtained by removing up to $f$ incoming links for each $i\in \calV$. Indeed, we are able to show that Condition 2 holds if and only if $\calH^{\prime}$ contains a single source component. That is the removal of sink nodes does not matter for the convergence of consensus because it does not affect the uniqueness of source component and the only state evolution of the non-faulty nodes matter in the consensus problem.
\end{remark}
The sufficiency of Condition 1 is implied by several auxiliary propositions and lemmas. We first observe that the belief evolution is governed by reduced graphs.
\begin{proposition}
\label{lower bound}
Let $\xi=\frac{1}{1+\max_{i\in \calV} |\calI_i|}$. For $t\ge 1$, there exists a reduced graph $\calH[t]$ with adjacency matrix ${\bf H}[t]$ such that
$$
{\bf A}[t]\ge \xi {\bf H}[t].
$$ 
\end{proposition}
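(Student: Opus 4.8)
The plan is to read off a suitable reduced graph $\calH[t]$ directly from the communication pattern in iteration $t$ and then to compare weights entrywise. Construct $\calH[t]$ as follows. For each agent $i\in\bar{\calN}[t]$, keep exactly the incoming edges from $\calR_i[t]$ together with the self-loop at $i$, \ie delete the incoming links in $\calI_i\setminus\calR_i[t]$; since the receive step forces $|\calR_i[t]|=|\calI_i|-f$, this removes exactly $f$ incoming links at $i$, which step (i) of Definition~\ref{reduced graph} permits. For each agent $i\notin\bar{\calN}[t]$ --- and there are at most $f$ of these, because $\calN\subseteq\bar{\calN}[t]$ gives $\calV\setminus\bar{\calN}[t]\subseteq\calV\setminus\calN$ and $|\calV\setminus\calN|\le f$ --- remove $i$ as a sink in step (ii); intuitively, such an agent has crashed and from then on merely retains its own pseudo-belief. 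Take ${\bf H}[t]$ to be the $n\times n$ row-stochastic matrix whose $i$-th row is, for $i\in\bar{\calN}[t]$, the uniform average over the in-neighbors of $i$ in $\calH[t]$ and $i$ itself, and, for $i\notin\bar{\calN}[t]$, the $i$-th row of the identity matrix.

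Granting for the moment that $\calH[t]$ is a genuine reduced graph, the inequality ${\bf A}[t]\ge\xi{\bf H}[t]$ follows from two observations. First, row $i$ of ${\bf H}[t]$ is supported inside the support of row $i$ of ${\bf A}[t]$: for $i\in\bar{\calN}[t]$ every in-neighbor of $i$ in $\calH[t]$ lies in $\calR_i[t]$, which by \eqref{matrix 1} is exactly where row $i$ of ${\bf A}[t]$ is positive; and for $i\notin\bar{\calN}[t]$ both rows equal the $i$-th identity row by \eqref{matrix 2}. Second, wherever ${\bf H}_{ij}[t]$ is positive it is at most $1$, whereas for $i\in\bar{\calN}[t]$ we have ${\bf A}_{ij}[t]=\frac{1}{|\calI_i|-f+1}\ge\frac{1}{|\calI_i|+1}\ge\frac{1}{1+\max_{k\in\calV}|\calI_k|}=\xi$, and for $i\notin\bar{\calN}[t]$ we have ${\bf A}_{ij}[t]=1\ge\xi$. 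Hence $\xi{\bf H}_{ij}[t]\le\xi\le{\bf A}_{ij}[t]$ whenever ${\bf H}_{ij}[t]>0$, and trivially $0=\xi{\bf H}_{ij}[t]\le{\bf A}_{ij}[t]$ otherwise, which gives the claimed bound.

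The point that needs the most care --- and the step I expect to be the main obstacle --- is checking that the above construction really yields a reduced graph in the sense of Definition~\ref{reduced graph}. The per-node link budget in step (i) is met since $|\calI_i\setminus\calR_i[t]|=f$, and the sink budget in step (ii) is met since $|\calV\setminus\bar{\calN}[t]|\le f$; what remains is the consistency of the two deletions, \ie that the at-most-$f$ agents of $\calV\setminus\bar{\calN}[t]$ can be realized \emph{simultaneously} as sinks of the graph produced by step (i). The awkward case is an agent $j$ that crashes \emph{during} iteration $t$ after having already transmitted, so that $j\in\calR_i[t]$ for some live agent $i$ and the edge $j\to i$ survives step (i) under the naive construction --- then $j$ is not literally a sink. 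Resolving this calls for a small rerouting argument: reassign which incoming links each node drops in step (i) so that every such $j$ becomes a sink while no node exceeds its budget of $f$ deletions (using, \eg, that an agent that has already crashed never transmits afterwards and that there are at most $f$ crashed agents in all). Once this bookkeeping is in place, the entrywise comparison above completes the proof.
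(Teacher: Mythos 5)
Your construction of $\calH[t]$ and the entrywise comparison are essentially the same as in the paper's proof (Appendix~\ref{app:lower bound}): there, too, each live node $i$ drops the $f$ incoming links in $\calI_i\setminus\calR_i[t]$, the crashed agents are removed as sinks, and the bound follows because every positive entry of ${\bf A}[t]$ is at least $\xi$. (Whether ${\bf H}[t]$ is the $0$--$1$ adjacency matrix, as the statement requires, or your row-stochastic surrogate is immaterial: both have the same support and entries at most $1$.)

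The gap you flag at the end, however, is genuine, and the ``small rerouting argument'' you defer to cannot be carried out under Definition~\ref{reduced graph}. Take $n=3$, $f=1$, the complete directed graph, and an execution in which agent $3$ transmits in iteration $t$ and then crashes, while agents $1$ and $2$ each wait for $|\calI_i|-f=1$ message and happen to receive only agent $3$'s, so $\calR_1[t]=\calR_2[t]=\{3\}$ and $\bar{\calN}[t]=\{1,2\}$. Then ${\bf A}_{12}[t]={\bf A}_{21}[t]=0$ and row $3$ of ${\bf A}[t]$ is the identity row, so a reduced graph witnessing the claim may contain none of the edges $2\to 1$, $1\to 2$, $1\to 3$, $2\to 3$. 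Node $1$ must spend its single permitted deletion on $2\to 1$ and node $2$ on $1\to 2$; hence $3\to 1$ and $3\to 2$ survive step (i), node $3$ is not a sink and cannot be removed in step (ii); and since $|\calI_3|=2>f$, node $3$ cannot shed both of its own incoming links either. Checking all choices of deletions and removed sinks shows that \emph{no} reduced graph of this $G(\calV,\calE)$ satisfies ${\bf A}[t]\ge\xi{\bf H}[t]$, so the obstruction is not bookkeeping: one must either relax Definition~\ref{reduced graph} (e.g., permit outright removal of up to $f$ nodes rather than only of sinks) or restrict the inequality (e.g., to rows and columns indexed by $\bar{\calN}[t]$), with the downstream uses in Theorem~\ref{bound} adjusted accordingly. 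Be aware that the paper's own proof does not resolve this case either: its Case (1) asserts that every row of ${\bf A}[t]$ has $|\calI_i|-f+1$ entries bounded below by $\xi$, which fails for agents in $\calN[t]-\bar{\calN}[t]$, and its Case (2) removes as sinks only the agents in $\calV-\calN[t]$, which are genuinely isolated because their messages carry earlier iteration tags. Your diagnosis of where the difficulty sits is sharper than the paper's treatment, but the proof as proposed is incomplete at exactly the point you identify.
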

The proof of Proposition \ref{lower bound} can be found in Appendix \ref{app:lower bound}.

Let $\calH$ be a reduced graph of $G(\calV, \calE)$ with ${\bf H}$ as the adjacency matrix.
The convergence of the backward product ${\bf \Phi}(t, r)$ can be analyzed using ergodic coefficients. We generalize some well-known results obtained by Wolfowitz \cite{wolfowitz1963products} and Hajnal \cite{Hajnal58} to submatrices.

For $t\ge 1$, $t^{\prime}\ge t$ and $r\ge 1$, define
\begin{align}
\delta_{r}({\bf \Phi}(t^{\prime}, t))&\triangleq \max_{j\in \calV} \max_{i, i^{\prime} \in \calN[r]} \left | {\bf \Phi}_{i j}(t^{\prime}, t)-{\bf \Phi}_{i^{\prime} j}(t^{\prime}, t)\right |.
\label{ergodic 1}\\
\eta_{r}({\bf \Phi}(t^{\prime}, t)) &\triangleq \min_{i, i^{\prime} \in \calN[r]} \sum_{j\in \calV} \min \{{\bf \Phi}_{i j}(t^{\prime}, t), {\bf \Phi}_{i^{\prime} j}(t^{\prime}, t)\} \label{ergodic 2}.
\end{align}
Since $\calN[t]\subseteq \calN[r]$ for any $r\le t$, by \eqref{ergodic 1} and \eqref{ergodic 2}, we get
it is easy to see that
\begin{align}
\label{monotone 1}
\delta_{t}({\bf \Phi}(t^{\prime}, t))\le \delta_{r}({\bf \Phi}(t^{\prime}, t)),~\text{and}~~\eta_{t}({\bf \Phi}(t^{\prime}, t))\ge \eta_{r}({\bf \Phi}(t^{\prime}, t)).
\end{align}
Note that the definition of $\delta_r\pth{\bf \Phi}\pth{t^{\prime}, t}$ is symmetric in $i$ and $i^{\prime}$. Thus,
\begin{align}
\label{ergodic 11}
\nonumber
\delta_{r}({\bf \Phi}(t^{\prime}, t))&=\max_{j\in \calV} \max_{i, i^{\prime} \in \calN[r]} \left | {\bf \Phi}_{i j}(t^{\prime}, t)-{\bf \Phi}_{i^{\prime} j}(t^{\prime}, t)\right |\\
&=\max_{j\in \calV} \max_{i, i^{\prime} \in \calN[r]} \pth{ {\bf \Phi}_{i j}(t^{\prime}, t)-{\bf \Phi}_{i^{\prime} j}(t^{\prime}, t)}.
\end{align}
\begin{proposition}
\label{p3}
Given $t\ge 1$, for $t^{\prime}\ge t$, $i\in \calN[t]$ and $j\notin \calN[t]$, it holds that
\begin{align*}
{\bf \Phi}_{ij}(t^{\prime}, t)=0.
\end{align*}
\end{proposition}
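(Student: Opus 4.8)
The plan is to prove this by induction on $t' \ge t$, exploiting the structure of the matrices ${\bf A}[s]$ established in \eqref{matrix 1} and \eqref{matrix 2}, together with the crucial monotonicity $\calN[s+1] \subseteq \calN[s]$. The key structural fact I would isolate first is: for any $s \ge 1$, if $k \notin \calN[s]$ then ${\bf A}_{\ell k}[s] = 0$ for \emph{every} $\ell \in \calV$ (this is exactly the observation already made right after \eqref{matrix 2}, that ${\bf A}_{ij}[t]=0$ for $j \notin \calN[t]$). Dually, I want: if $k \notin \calN[s]$, then the $k$-th row of ${\bf A}[s]$ is $e_k^\top$, i.e. ${\bf A}_{k k}[s] = 1$ and ${\bf A}_{k \ell}[s]=0$ for $\ell \ne k$; this follows from \eqref{matrix 2} since $k \notin \calN[s]$ implies $k \notin \bar\calN[s] = \calN[s+1]$, hence $k \notin \bar\calN[s]$ and the second case of the definition applies. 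Wait — more carefully: $k\notin\calN[s]$ means $k$ crashed before iteration $s$ began, so certainly $k\notin\bar\calN[s]$, so \eqref{matrix 2} gives the row $e_k^\top$.

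For the base case $t' = t$: we have ${\bf \Phi}(t,t) = {\bf A}[t]$, and for $i \in \calN[t]$, $j \notin \calN[t]$ the entry ${\bf A}_{ij}[t] = 0$ by the cited observation. For the inductive step, suppose the claim holds for $t'-1 \ge t$, i.e. ${\bf \Phi}_{i k}(t'-1, t) = 0$ for all $i \in \calN[t]$, $k \notin \calN[t]$. Write ${\bf \Phi}(t', t) = {\bf A}[t'] {\bf \Phi}(t'-1, t)$, so for $i \in \calN[t]$ and $j \notin \calN[t]$,
\[
{\bf \Phi}_{ij}(t', t) = \sum_{k=1}^n {\bf A}_{ik}[t']\, {\bf \Phi}_{kj}(t'-1, t) = \sum_{k \in \calN[t']} {\bf A}_{ik}[t']\, {\bf \Phi}_{kj}(t'-1, t),
\]
using ${\bf A}_{ik}[t'] = 0$ for $k \notin \calN[t']$. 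Now split the sum according to whether $k \in \calN[t]$ or not. Since $\calN[t'] \subseteq \calN[t]$ (as $t' \ge t$ and $\calN[\cdot]$ is non-increasing), every $k$ in the sum lies in $\calN[t]$, so by the induction hypothesis ${\bf \Phi}_{kj}(t'-1, t) = 0$ for each such $k$ (since $j \notin \calN[t]$). Hence the whole sum vanishes, completing the induction.

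The main obstacle — really the only subtlety worth flagging — is getting the direction of the set inclusions straight: one must use $\calN[t']\subseteq\calN[t]$ to guarantee that the surviving summation indices $k$ all sit inside $\calN[t]$ so the induction hypothesis applies with the \emph{same} "bad" column index $j$, and separately use ${\bf A}_{ik}[t']=0$ for $k\notin\calN[t']$ to kill the terms where $k$ has crashed by iteration $t'$. An alternative, slightly slicker route avoiding explicit induction: observe that $\calN[t']^c \subseteq \calN[t]^c =: J$ is an invariant "absorbing" set of column indices — by the row-structure fact above, for $s \ge t$ the submatrix ${\bf A}[s]$ restricted to rows in $J$ and columns in $\calV$ is supported on $J$ (indeed is a permutation-like identity on $J$), while no row outside $J$ puts weight on $J$; so the block decomposition of each ${\bf A}[s]$ ($s\ge t$) with respect to $(\calN[t], J)$ is block lower-triangular with a zero upper-right block, and this block structure is preserved under products. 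Either framing works; I would present the explicit induction since it is self-contained and short.
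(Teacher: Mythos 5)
Your overall strategy is the same as the paper's (induction on $t'$ with the factorization ${\bf \Phi}(t',t)={\bf A}[t']{\bf \Phi}(t'-1,t)$, and the base case is identical), but there is one step whose justification as written is wrong, and you have in fact recorded the counterexample to it yourself. Your ``key structural fact'' asserts that $k\notin\calN[s]$ implies ${\bf A}_{\ell k}[s]=0$ for \emph{every} $\ell\in\calV$; your ``dual'' fact says that for such $k$ the $k$-th row of ${\bf A}[s]$ is $e_k^{\top}$, i.e.\ ${\bf A}_{kk}[s]=1$. These contradict each other at $\ell=k$: the correct statement is that the $k$-th column vanishes only off the diagonal (equivalently, for rows $\ell\in\calN[s]$). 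This matters in your inductive step in the case where $i\in\calN[t]$ has itself crashed by iteration $t'$, so $i\notin\calN[t']$: then ${\bf A}_{ii}[t']=1$ with $i\notin\calN[t']$, and your reduction of $\sum_{k=1}^n{\bf A}_{ik}[t']\,{\bf \Phi}_{kj}(t'-1,t)$ to $\sum_{k\in\calN[t']}$ ``using ${\bf A}_{ik}[t']=0$ for $k\notin\calN[t']$'' drops a term with a nonzero coefficient. The conclusion survives because that dropped term is ${\bf A}_{ii}[t']\,{\bf \Phi}_{ij}(t'-1,t)=1\cdot 0$ by the induction hypothesis (since $i\in\calN[t]$, $j\notin\calN[t]$), which is exactly how the paper patches this: it isolates $k=i$ and kills it with the induction hypothesis rather than with a vanishing matrix entry. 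So the gap is a one-line fix, but as stated the step's justification fails.

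Your alternative block-triangular framing at the end is, somewhat ironically, the cleaner and fully correct route: with $J=\calV\setminus\calN[t]$, every ${\bf A}[s]$ for $s\ge t$ has zero $(\calN[t],J)$ block, because the only nonzero entries in a column $k\in J$ are the diagonal entries ${\bf A}_{kk}[s]$ of crashed rows, and those sit in the $(J,J)$ block, never in the $(\calN[t],J)$ block; this zero-block structure is preserved under products, giving the proposition without any case analysis on whether $i$ survives to iteration $t'$. If you present the explicit induction, restate the column fact as ``${\bf A}_{\ell k}[s]=0$ for $k\notin\calN[s]$ and $\ell\ne k$'' and handle $k=i$ separately via the induction hypothesis.
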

\begin{proof}
We prove this proposition by inducting on $t^{\prime}$.

(Base case) When $t^{\prime}=t$, ${\bf \Phi}(t^{\prime}, t)={\bf A}[t]$. 
Since $i\in \calN[t]$ and $j\notin \calN[t]$, it holds that $i\not=j$, and ${\bf A}_{ij}[t]=0$, proving Proposition \ref{p3} for ${\bf \Phi}(t, t)$.

(Induction hypothesis) Suppose Proposition \ref{p3} holds for $t^{\prime}\ge t$.

(Inductive step) Now we prove Proposition \ref{p3} for ${\bf \Phi}(t^{\prime}+1, t)$.
Recall that ${\bf \Phi}(t^{\prime}+1, t)={\bf A}[t^{\prime}+1]{\bf \Phi}(t^{\prime}, t)$. For $i\in \calN[t]$ and $j\notin \calN[t]$, we get
\begin{align}
\label{induction}
\nonumber
{\bf \Phi}_{ij}\pth{t^{\prime}+1, t}&=\sum_{k=1}^n {\bf A}_{ik}[t^{\prime}+1]{\bf \Phi}_{kj}(t^{\prime}, t)\\
\nonumber
&=\sum_{k\in \calN[t]} {\bf A}_{ik}[t^{\prime}+1]{\bf \Phi}_{kj}(t^{\prime}, t)+\sum_{k\notin \calN[t]} {\bf A}_{ik}[t^{\prime}+1]{\bf \Phi}_{kj}(t^{\prime}, t)\\
\nonumber
&=\sum_{k\in \calN[t]} {\bf A}_{ik}[t^{\prime}+1]\,0+\sum_{k\notin \calN[t]} {\bf A}_{ik}[t^{\prime}+1]{\bf \Phi}_{kj}(t^{\prime}, t)~~~\text{by induction hypothesis}\\
&\le\sum_{k\notin \calN[t^{\prime}+1]} {\bf A}_{ik}[t^{\prime}+1]{\bf \Phi}_{kj}(t^{\prime}, t)~~~\text{since } \calN[t^{\prime}+1]\subseteq \calN[t].
\end{align}
For $k\not=i$, by \eqref{matrix 2}, we know ${\bf A}_{ik}[t^{\prime}+1]=0$. For $k=i$, by induction hypothesis, we know that
${\bf \Phi}_{ij}(t^{\prime}, t)=0$. Thus, we have
\begin{align*}
\sum_{k\notin \calN[t^{\prime}+1]} {\bf A}_{ik}[t^{\prime}+1]{\bf \Phi}_{kj}(t^{\prime}, t)&={\bf A}_{ii}[t^{\prime}+1]{\bf \Phi}_{ij}(t^{\prime}, t)+\sum_{k\notin \calN[t^{\prime}+1],\, k\not=i} {\bf A}_{ik}[t^{\prime}+1]{\bf \Phi}_{kj}(t^{\prime}, t) \\
&= 1\times 0+\sum_{k\notin \calN[t^{\prime}+1],\, k\not=i} 0\times{\bf \Phi}_{kj}(t^{\prime}, t)~=~0,
\end{align*}
i.e., the RHS of \eqref{induction} equals 0. This completes the proof of the induction, and Proposition \ref{p3} is proved.

\eproof
\end{proof}
Recall that ${\bf \Phi}(t^{\prime}, t)$ is row stochastic. As an immediate consequence of Proposition \ref{p3}, for any $t$, $t^{\prime}\ge t$, and $i\in \calN[t]$, the following holds.
\begin{align}
\label{sum}
\sum_{j\in \calN[t]} {\bf \Phi}_{ij}(t^{\prime}, t)=1.
\end{align}
The following two lemmas generalize the results from Wolfowitz \cite{wolfowitz1963products} and Hajnal \cite{Hajnal58}.
\begin{lemma}
\label{crash erdoc c1}
For each $t\ge 0$ and $t^{\prime}\ge t$, we have
\begin{align*}
\delta_{t}({\bf \Phi}(t^{\prime}, t))\le 1-\eta_{t}({\bf \Phi}(t^{\prime}, t)).
\end{align*}
\end{lemma}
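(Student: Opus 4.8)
The plan is to reduce the lemma to the elementary total‑variation inequality that, for any two probability vectors $p$ and $q$ on a finite index set $S$, one has $\max_{k\in S}\pth{p_k-q_k}\le 1-\sum_{k\in S}\min\sth{p_k,q_k}$, and to apply it row‑by‑row to the row‑stochastic matrix ${\bf \Phi}(t^{\prime},t)$ with $S=\calN[t]$.

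First I would fix $t\ge 0$, $t^{\prime}\ge t$, and an arbitrary pair of indices $i,i^{\prime}\in\calN[t]$. By Proposition \ref{p3}, ${\bf \Phi}_{ij}(t^{\prime},t)={\bf \Phi}_{i^{\prime}j}(t^{\prime},t)=0$ for every $j\notin\calN[t]$, and by \eqref{sum}, $\sum_{j\in\calN[t]}{\bf \Phi}_{ij}(t^{\prime},t)=\sum_{j\in\calN[t]}{\bf \Phi}_{i^{\prime}j}(t^{\prime},t)=1$. Hence the restricted rows $p\triangleq\pth{{\bf \Phi}_{ij}(t^{\prime},t)}_{j\in\calN[t]}$ and $q\triangleq\pth{{\bf \Phi}_{i^{\prime}j}(t^{\prime},t)}_{j\in\calN[t]}$ are genuine probability vectors on $\calN[t]$. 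Since the entries indexed by $j\notin\calN[t]$ vanish for both rows, $\max_{j\in\calV}\pth{{\bf \Phi}_{ij}(t^{\prime},t)-{\bf \Phi}_{i^{\prime}j}(t^{\prime},t)}=\max_{j\in\calN[t]}\pth{p_j-q_j}$ and $\sum_{j\in\calV}\min\sth{{\bf \Phi}_{ij}(t^{\prime},t),{\bf \Phi}_{i^{\prime}j}(t^{\prime},t)}=\sum_{j\in\calN[t]}\min\sth{p_j,q_j}$.

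Next I would verify the elementary inequality. Because $p$ and $q$ are distributions, $\max_{j}\pth{p_j-q_j}\ge 0$, so it is attained at some coordinate with $p_j\ge q_j$, giving $\max_{j\in\calN[t]}\pth{p_j-q_j}\le\sum_{k:\,p_k\ge q_k}\pth{p_k-q_k}$. Writing $\sum_{k:\,p_k\ge q_k}p_k=1-\sum_{k:\,p_k<q_k}p_k$ and using $\sum_{k:\,p_k<q_k}p_k+\sum_{k:\,p_k\ge q_k}q_k=\sum_{k\in\calN[t]}\min\sth{p_k,q_k}$, this sum equals $1-\sum_{k\in\calN[t]}\min\sth{p_k,q_k}$. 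Substituting the identifications of the previous paragraph, we obtain, for every $j\in\calV$ and every $i,i^{\prime}\in\calN[t]$,
\begin{align*}
{\bf \Phi}_{ij}(t^{\prime},t)-{\bf \Phi}_{i^{\prime}j}(t^{\prime},t)\le 1-\sum_{j\in\calV}\min\sth{{\bf \Phi}_{ij}(t^{\prime},t),{\bf \Phi}_{i^{\prime}j}(t^{\prime},t)}.
\end{align*}

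Finally I would take the supremum over $j\in\calV$ and over $i,i^{\prime}\in\calN[t]$ on the left‑hand side: by the absolute‑value‑free form \eqref{ergodic 11} with $r=t$, this is exactly $\delta_{t}({\bf \Phi}(t^{\prime},t))$, while the supremum over $i,i^{\prime}$ of the right‑hand side is $1-\min_{i,i^{\prime}\in\calN[t]}\sum_{j\in\calV}\min\sth{{\bf \Phi}_{ij}(t^{\prime},t),{\bf \Phi}_{i^{\prime}j}(t^{\prime},t)}=1-\eta_{t}({\bf \Phi}(t^{\prime},t))$ by \eqref{ergodic 2}, yielding $\delta_{t}({\bf \Phi}(t^{\prime},t))\le 1-\eta_{t}({\bf \Phi}(t^{\prime},t))$. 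I do not expect a genuine obstacle here; the only point requiring care — and the sole place where this "submatrix" version departs from the classical Wolfowitz/Hajnal bound — is that $\delta_{t}$ and $\eta_{t}$ are defined through the restricted index set $\calN[t]$ rather than all of $\calV$, so one must first confirm, via Proposition \ref{p3} together with \eqref{sum}, that the rows of ${\bf \Phi}(t^{\prime},t)$ labelled by $\calN[t]$ really do behave as probability vectors supported on $\calN[t]$.
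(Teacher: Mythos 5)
Your proof is correct and is essentially the paper's own argument: both rest on the identity $1-\sum_{j}\min\{p_j,q_j\}=\sum_{j:\,p_j\ge q_j}(p_j-q_j)\ge\max_j(p_j-q_j)$ applied to the rows indexed by $i,i^{\prime}\in\calN[t]$, followed by maximizing over $i,i^{\prime}$ and $j$ via \eqref{ergodic 11} and \eqref{ergodic 2}. The only difference is your preliminary restriction of the rows to $\calN[t]$ via Proposition~\ref{p3} and \eqref{sum}, which is harmless but unnecessary, since the column index $j$ in both $\delta_t$ and $\eta_t$ already ranges over all of $\calV$ and the rows of ${\bf \Phi}(t^{\prime},t)$ are stochastic over $\calV$; the paper uses $\sum_{j\in\calV}{\bf \Phi}_{ij}(t^{\prime},t)=1$ directly.
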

\begin{proof}
\begin{align*}
&1-\eta_{t}({\bf \Phi}(t^{\prime}, t))=1-\min_{i, i^{\prime} \in \calN[t]} \sum_{j\in \calV} \min \{{\bf \Phi}_{i j }(t^{\prime}, t), {\bf \Phi}_{i^{\prime} j}(t^{\prime}, t)\}~~~\text{by (\ref{ergodic 2})}\\
&=\max_{i,i^{\prime}\in \calN[t]}\pth{1-\sum_{j\in \calV} \min \{{\bf \Phi}_{i j }(t^{\prime}, t), {\bf \Phi}_{i^{\prime} j}(t^{\prime}, t)\}}\\
&=\max_{i,i^{\prime}\in \calN[t]}\pth{\sum_{j\in \calV} {\bf \Phi}_{i j }(t^{\prime},t)-\sum_{j\in \calV} \min \{{\bf \Phi}_{i j }(t^{\prime}, t), {\bf \Phi}_{i^{\prime} j}(t^{\prime}, t)\}} ~\text{since }\sum_{j\in \calV} {\bf \Phi}_{i j }(t^{\prime},t)=1\\
&=\max_{i,i^{\prime}\in \calN[t]}\sum_{j: ~j\in \calV, {\bf \Phi}_{i j }(t^{\prime}, t) \ge {\bf \Phi}_{i^{\prime} j}(t^{\prime}, t) } \pth{{\bf \Phi}_{i j }(t^{\prime}, t)- {\bf \Phi}_{i^{\prime} j}(t^{\prime}, t)}\\
&\ge \max_{i,i^{\prime}\in \calN[t]}\max_{j\in \calV}\pth{ {\bf \Phi}_{i j }(t^{\prime}, t)- {\bf \Phi}_{i^{\prime} j}(t^{\prime}, t)}\\
&=\delta_t({\bf \Phi}(t^{\prime},t)) ~~~\text{by (\ref{ergodic 11})}.
\end{align*}

\end{proof}
\begin{lemma}
\label{crash erdoc c2}
 For $t_2> t_1\ge t_0\ge 1$,  define ${\bf P}={\bf \Phi}(t_2, t_1+1)$, ${\bf G}={\bf \Phi}(t_1, t_0)$, and ${\bf F}={\bf \Phi}(t_2, t_0).$
Then it holds that
\begin{align*}
\delta_{t_1+1}({\bf F}) \le \pth{1-\eta_{t_1+1}({\bf P})}\delta_{t_1+1}({\bf G}).
\end{align*}
\end{lemma}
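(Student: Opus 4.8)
The starting point is the semigroup identity ${\bf F}={\bf \Phi}(t_2,t_0)={\bf \Phi}(t_2,t_1+1)\,{\bf \Phi}(t_1,t_0)={\bf P}{\bf G}$, which is valid because $t_2>t_1\ge t_0\ge 1$. The statement is then a ``submultiplicativity'' of the ergodic coefficient $\delta$ under left multiplication, in the spirit of Wolfowitz and Hajnal, except that all row indices must be confined to $\calN[t_1+1]$. The only structural input beyond the classical argument is Proposition \ref{p3} together with \eqref{sum}: since ${\bf P}={\bf \Phi}(t_2,t_1+1)$ and $i,i'\in\calN[t_1+1]$, the rows $({\bf P}_{ik})_{k}$ and $({\bf P}_{i'k})_{k}$ are supported on $\calN[t_1+1]$, and (being rows of a row-stochastic matrix) each sums to $1$.

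First I would fix an arbitrary column $j\in\calV$ and two indices $i,i'\in\calN[t_1+1]$, and write ${\bf F}_{ij}-{\bf F}_{i'j}=\sum_{k}({\bf P}_{ik}-{\bf P}_{i'k}){\bf G}_{kj}$. Split the index set according to the sign of ${\bf P}_{ik}-{\bf P}_{i'k}$, letting $S^{+}=\{k:{\bf P}_{ik}>{\bf P}_{i'k}\}$ and $S^{-}=\{k:{\bf P}_{ik}<{\bf P}_{i'k}\}$; both are subsets of $\calN[t_1+1]$ since outside that set both entries vanish by Proposition \ref{p3}. Because the two rows have equal total mass $1$, the positive and negative excesses coincide, and a short computation identical to the one in the proof of Lemma \ref{crash erdoc c1} gives
\[
\sum_{k\in S^{+}}\bigl({\bf P}_{ik}-{\bf P}_{i'k}\bigr)\;=\;\sum_{k\in S^{-}}\bigl({\bf P}_{i'k}-{\bf P}_{ik}\bigr)\;=\;1-\sum_{k\in\calV}\min\{{\bf P}_{ik},{\bf P}_{i'k}\},
\]
a common value I would denote $\alpha_{i,i'}$.

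Next, using ${\bf F}_{ij}-{\bf F}_{i'j}=\sum_{k\in S^{+}}({\bf P}_{ik}-{\bf P}_{i'k}){\bf G}_{kj}-\sum_{k\in S^{-}}({\bf P}_{i'k}-{\bf P}_{ik}){\bf G}_{kj}$, I would bound ${\bf G}_{kj}$ from above by $\max_{k\in S^{+}}{\bf G}_{kj}$ in the first sum and from below by $\min_{k\in S^{-}}{\bf G}_{kj}$ in the second, to get
\[
{\bf F}_{ij}-{\bf F}_{i'j}\;\le\;\alpha_{i,i'}\Bigl(\max_{k\in S^{+}}{\bf G}_{kj}-\min_{k\in S^{-}}{\bf G}_{kj}\Bigr)\;\le\;\alpha_{i,i'}\,\delta_{t_1+1}({\bf G}),
\]
where the last inequality holds because $S^{+},S^{-}\subseteq\calN[t_1+1]$, so the maximizing and minimizing indices are both in $\calN[t_1+1]$, and the case $S^{+}=\O$ or $S^{-}=\O$ is vacuous since then $\alpha_{i,i'}=0$. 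Finally I would take the maximum over $j\in\calV$ and over $i,i'\in\calN[t_1+1]$: on the left this produces $\delta_{t_1+1}({\bf F})$ by the sign-free form \eqref{ergodic 11}, and on the right $\max_{i,i'\in\calN[t_1+1]}\alpha_{i,i'}=1-\min_{i,i'\in\calN[t_1+1]}\sum_{k\in\calV}\min\{{\bf P}_{ik},{\bf P}_{i'k}\}=1-\eta_{t_1+1}({\bf P})$ by \eqref{ergodic 2}, yielding $\delta_{t_1+1}({\bf F})\le(1-\eta_{t_1+1}({\bf P}))\,\delta_{t_1+1}({\bf G})$.

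The computation is essentially routine once the bookkeeping is set up; the one point that needs care — and the only place the crash-failure setting enters — is ensuring that every index set that appears ($S^{+}$, $S^{-}$, and the pairs over which $\delta$ and $\eta$ are extremized) stays inside $\calN[t_1+1]$, so that the restricted coefficients $\delta_{t_1+1}$ and $\eta_{t_1+1}$ genuinely control the quantities in play. That confinement is exactly what Proposition \ref{p3} and the partial row-stochasticity relation \eqref{sum} provide, which is why they were established beforehand.
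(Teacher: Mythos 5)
Your proposal is correct and follows essentially the same route as the paper's proof: factor ${\bf F}={\bf P}{\bf G}$, discard the columns $k\notin\calN[t_1+1]$ via Proposition \ref{p3}, split by the sign of ${\bf P}_{ik}-{\bf P}_{i'k}$, use the equal positive/negative excess $1-\sum_k\min\{{\bf P}_{ik},{\bf P}_{i'k}\}$, and bound the resulting oscillation of ${\bf G}$ over $\calN[t_1+1]$ by $\delta_{t_1+1}({\bf G})$. The only cosmetic difference is that you take the max/min of ${\bf G}_{kj}$ over $S^{+}$ and $S^{-}$ rather than over all of $\calN[t_1+1]$, which is harmless since both sets are contained in $\calN[t_1+1]$.
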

\begin{proof}
Since ${\bf P}={\bf \Phi}(t_2, t_1+1)$, ${\bf G}={\bf \Phi}(t_1, t_0)$, and ${\bf F}={\bf \Phi}(t_2, t_0)$, it holds that
$${\bf F}={\bf \Phi}(t_2, t_0)={\bf \Phi}(t_2, t_1+1){\bf \Phi}(t_1, t_0)={\bf P}{\bf G}.$$
For any $i, j\in \calV$, we have ${\bf F}_{i j }=\sum_{k=1}^n{\bf P}_{i k}{\bf G}_{k j}$. We get
\begin{align}
\nonumber
\delta_{t_1+1}({\bf F})&= \max_{j \in \calV} \max_{i, i^{\prime}\in \calN[t_1+1]} \pth{{\bf F}_{i j }- {\bf F}_{i^{\prime} j}}~~~\text{by (\ref{ergodic 11})}\\
\nonumber
&=\max_{j \in \calV} \max_{i, i^{\prime}\in \calN[t_1+1]} \pth{\sum_{k=1}^n{\bf P}_{i k}{\bf G}_{k j}- \sum_{k=1}^n{\bf P}_{i^{\prime} k}{\bf G}_{k j}}\\
\nonumber
&=\max_{j \in \calV} \max_{i, i^{\prime}\in \calN[t_1+1]} \pth{\sum_{k=1}^n\pth{{\bf P}_{i k}- {\bf P}_{i^{\prime} k}}{\bf G}_{k j}}\\
&=\max_{j \in \calV} \max_{i, i^{\prime}\in \calN[t_1+1]} \pth{\sum_{k\in \calN[t_1+1]}\pth{{\bf P}_{i k}- {\bf P}_{i^{\prime} k}}{\bf G}_{k j}+ \sum_{k\notin \calN[t_1+1]}\pth{{\bf P}_{i k}- {\bf P}_{i^{\prime} k}}{\bf G}_{k j}} \label{partition1}.
\end{align}
%
%
%
Recall that $\calN[t_1+1]$ is the collection of agents that have not crashed by the beginning of iteration $t_1+1$. If $k\notin \calN[t_1+1]$, then $k$ crashed in the first $t_1$ iterations.  Intuitively speaking, for any $t>t_1$, since agent $k$ has already crashed, it cannot influence any other agents -- no other agents can receive any message from agent $k$ after iteration $t_1$. From Proposition \ref{p3}, we know that ${\bf P}_{i k}=0={\bf P}_{i^{\prime} k}$ for all $k\notin \calN[t_1+1]$, and $i, i^{\prime} \in \calN[t_1+1]$.
Consequently, we have
\begin{align}
\label{coeff aux 1}
\sum_{k\notin \calN[t_1+1]}\pth{{\bf P}_{i k}- {\bf P}_{i^{\prime} k}}{\bf G}_{k j}~=~\sum_{k\notin \calN[t_1+1]}0\, {\bf G}_{k j}~=~0, ~\forall\, j\in \calV.
\end{align}
Thus, equality (\ref{partition1}) can be simplified as
\begin{align}
\label{coeff aux}
\nonumber
\delta_{t_1+1}({\bf F})~&=\max_{j \in \calV} \max_{i, i^{\prime}\in \calN[t_1+1]} \pth{\sum_{k\in \calN[t_1+1]}\pth{{\bf P}_{i k}- {\bf P}_{i^{\prime} k}}{\bf G}_{k j}}~~~\text{by \eqref{coeff aux 1}}\\
\nonumber
&= \max_{j \in \calV} \max_{i, i^{\prime}\in \calN[t_1+1]} \Bigg{(}\sum_{k:~k\in \calN[t_1+1], \,{\bf P}_{ik}\ge {\bf P}_{i^{\prime}k}}\pth{{\bf P}_{i k}- {\bf P}_{i^{\prime} k}}{\bf G}_{k j}\\
\nonumber
&\quad \quad \quad \quad \quad \quad \quad \quad \quad \quad \quad \quad+\sum_{k:~k\in \calN[t_1+1], \,{\bf P}_{ik}< {\bf P}_{i^{\prime}k}}\pth{{\bf P}_{i k}- {\bf P}_{i^{\prime} k}}{\bf G}_{k j} \Bigg{)}\\
\nonumber
&\le \max_{j \in \calV} \max_{i, i^{\prime}\in \calN[t_1+1]} \Bigg{(}\sum_{k:~k\in \calN[t_1+1], \,{\bf P}_{i\,k}\ge {\bf P}_{i^{\prime}\,k}}\pth{{\bf P}_{i k}- {\bf P}_{i^{\prime} k}}\max_{k\in \calN[t_1+1]}{\bf G}_{k j}\\
&\quad \quad \quad \quad \quad \quad \quad \quad \quad \quad \quad \quad+\sum_{k:~k\in \calN[t_1+1], \,{\bf P}_{i\,k}< {\bf P}_{i^{\prime}\,k}}\pth{{\bf P}_{i k}- {\bf P}_{i^{\prime} k}}\min_{k\in \calN[t_1+1]}{\bf G}_{k j} \Bigg{)}.
\end{align}
Recall that ${\bf P}_{i k}=0={\bf P}_{i^{\prime} k}$ for all $k\notin \calN[t_1+1]$, and $i, i^{\prime} \in \calN[t_1+1]$. We have
\begin{align*}
\sum_{k\in \calN[t_1+1]} {\bf P}_{i k}~=~1~=~\sum_{k\in \calN[t_1+1]} {\bf P}_{i^{\prime} k}.
\end{align*}
Then,
\begin{align*}
0~&=~\sum_{k\in \calN[t_1+1]} {\bf P}_{i k}-\sum_{k\in \calN[t_1+1]} {\bf P}_{i^{\prime} k} \\
&=\sum_{k:~k\in \calN[t_1+1],~{\bf P}_{i k}< {\bf P}_{i^{\prime} k}}\pth{{\bf P}_{i k}- {\bf P}_{i^{\prime} k}}+\sum_{k:~k\in \calN[t_1+1],~{\bf P}_{i k}\ge {\bf P}_{i^{\prime} k}}\pth{{\bf P}_{i k}- {\bf P}_{i^{\prime} k}}.
\end{align*}
Thus, we have
\begin{align}
\label{coeff aux 2}
\sum_{k:~k\in \calN[t_1+1],~{\bf P}_{i k}< {\bf P}_{i^{\prime} k}}\pth{{\bf P}_{i k}- {\bf P}_{i^{\prime} k}}~=~-\sum_{k:~k\in \calN[t_1+1],~{\bf P}_{i k}\ge {\bf P}_{i^{\prime} k}}\pth{{\bf P}_{i k}- {\bf P}_{i^{\prime} k}}.
\end{align}
The RHS of \eqref{coeff aux} becomes
\begin{align}
\nonumber
\delta_{t_1+1}({\bf F})~&\le ~ \max_{j \in \calV}\max_{i, i^{\prime}\in \calN[t_1+1]} \pth{\sum_{k:~k\in \calN[t_1+1], \,{\bf P}_{i\,k}\ge {\bf P}_{i^{\prime}\,k}}\pth{{\bf P}_{i k}- {\bf P}_{i^{\prime} k}}} \pth{\max_{k\in \calN[t_1+1]}{\bf G}_{k j}-\min_{k\in \calN[t_1+1]}{\bf G}_{k j}}\\
\nonumber
&=\max_{i, i^{\prime}\in \calN[t_1+1]} \pth{\sum_{k:~k\in \calN[t_1+1], \,{\bf P}_{i\,k}\ge {\bf P}_{i^{\prime}\,k}}\pth{{\bf P}_{i k}- {\bf P}_{i^{\prime} k}}} \max_{j \in \calV}\max_{k, k^{\prime}\in \calN[t_1+1]}\pth{{\bf G}_{k j}-{\bf G}_{k^{\prime} j}}\\
\nonumber
&=\max_{i, i^{\prime}\in \calN[t_1+1]} \pth{\sum_{k:~k\in \calN[t_1+1], \,{\bf P}_{i\,k}\ge {\bf P}_{i^{\prime}\,k}}\pth{{\bf P}_{i k}- {\bf P}_{i^{\prime} k}}}\delta_{t_1+1}\pth{\bf G}~~~\text{by \eqref{ergodic 1}}\\
\nonumber
&=\max_{i, i^{\prime}\in \calN[t_1+1]} \pth{1-\sum_{k\in \calV}\min \{{\bf P}_{i\,k}, {\bf P}_{i^{\prime}\,k}\}}\delta_{t_1+1}({\bf G})~~~\text{since}\sum_{k\in \calN[t_1+1]} {\bf P}_{i k}=1\\
&=\pth{1-\eta_{t_1+1}({\bf P})}\delta_{t_1+1}({\bf G}),
\end{align}
proving the lemma.

\end{proof}
%
%
%

%
%
%
\vskip 0.4\baselineskip
Recall that $\chi$ is the total number of reduced graphs of $G(\calV, \calE)$. Using Proposition \ref{lower bound}, Lemma \ref{crash erdoc c1} and Lemma \ref{crash erdoc c2}, we can show that $\left | {\bf \Phi}_{ik}(t,r)-{\bf \Phi}_{jk}(t,r)\right |$, where $i, j\in \calN$, decays exponentially fast.
\begin{theorem}
\label{bound}
Suppose $r\le t$. For any $i, j\in \bar{\calN}[t]$ and for any $k\in \calV$, it holds that
\begin{align*}
\left | {\bf \Phi}_{ik}(t,r)-{\bf \Phi}_{jk}(t,r)\right |\le \min\{~1,~ (1-\xi^{n\chi})^{\lfloor \frac{t-r+1}{n\chi}\rfloor -f}\}.
\end{align*}
\end{theorem}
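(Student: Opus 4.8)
The plan is a Wolfowitz--Hajnal contraction argument carried out on windows of $n\chi$ consecutive iterations; the term $-f$ in the exponent will absorb the at most $f$ windows during which some agent crashes. (I take Condition 1 as a standing hypothesis, as the preceding text does; without it the bound fails.) First I reduce to bounding a single ergodic coefficient: since $\bar{\calN}[t]=\calN[t+1]\subseteq\calN[t]$, for $i,j\in\bar{\calN}[t]$ and any $k\in\calV$ we have $|{\bf\Phi}_{ik}(t,r)-{\bf\Phi}_{jk}(t,r)|\le\delta_t({\bf\Phi}(t,r))$ by \eqref{ergodic 1}, while $\delta_t({\bf\Phi}(t,r))\le1$ since ${\bf\Phi}(t,r)$ is row-stochastic; so it suffices to prove $\delta_t({\bf\Phi}(t,r))\le(1-\xi^{n\chi})^{q-f}$ with $q:=\lfloor(t-r+1)/(n\chi)\rfloor$, and there is nothing to do unless $q>f$.

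The crux is a per-window scrambling estimate: if iterations $a,a+1,\dots,a+n\chi-1$ contain no crash, so that $\calN[a]=\calN[a+n\chi]=:\calM$, then $\eta_a\big({\bf\Phi}(a+n\chi-1,a)\big)\ge\xi^{n\chi}$. To see this, note that for each such $s$ Proposition \ref{lower bound} gives ${\bf A}[s]\ge\xi\,{\bf H}[s]$ for some reduced graph $\calH[s]$, and ${\bf A}_{uu}[s]=1/(|\calI_u|-f+1)\ge\xi$ for every $u\in\calM$. Since there are only $\chi$ distinct reduced graphs, by the pigeonhole principle one reduced graph $\calH$ occurs in at least $n$ of these $n\chi$ iterations; by Condition 1 it has a \emph{unique} source component $\calS_{\calH}$, and in a finite digraph a unique source component reaches every other vertex along a directed path of length at most $n-1$. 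Fixing $v\in\calS_{\calH}$ and using the $\ge n$ occurrences of $\calH$ to advance one hop at a time along such paths (each hop weighted $\ge\xi$) together with the diagonal entries $\ge\xi$ to ``wait'' at the remaining iterations, one obtains ${\bf\Phi}_{iv}(a+n\chi-1,a)\ge\xi^{n\chi}$ for every $i\in\calM$, whence $\sum_j\min\{{\bf\Phi}_{ij},{\bf\Phi}_{i'j}\}\ge\xi^{n\chi}$ for all $i,i'\in\calM$.

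Next I telescope over windows. Partition $[r,t]$ into the $q$ disjoint blocks $B_\ell=[\,t-\ell n\chi+1,\ t-(\ell-1)n\chi\,]$ ($\ell=1,\dots,q$) together with a possibly empty prefix $[r,\,t-qn\chi]$, and factor ${\bf\Phi}(t,r)$ accordingly. Starting from $\delta_t({\bf\Phi}(t,r))\le\delta_{t-n\chi+1}({\bf\Phi}(t,r))$ by \eqref{monotone 1}, I peel $B_1$ off with Lemma \ref{crash erdoc c2}, pick up a factor $1-\eta_{t-n\chi+1}$ of the product over $B_1$, re-apply \eqref{monotone 1}, and repeat for $B_2,\dots,B_{q-1}$; the earliest block $B_q$ is handled by Lemma \ref{crash erdoc c2} again when the prefix is nonempty and by Lemma \ref{crash erdoc c1} when it is empty; the leftover $\delta$ is bounded by $1$. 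This yields $\delta_t({\bf\Phi}(t,r))\le\prod_{\ell=1}^{q}\big(1-\eta_{\cdot}(\text{product over }B_\ell)\big)$. Every crash event lies in exactly one block, so at most $f$ of the $B_\ell$ contain a crash; for those I only use $1-\eta_{\cdot}\le1$, and for the remaining $\ge q-f$ crash-free blocks the per-window estimate gives $1-\eta_{\cdot}\le1-\xi^{n\chi}$, so $\delta_t({\bf\Phi}(t,r))\le(1-\xi^{n\chi})^{q-f}$, which combined with $\delta_t\le 1$ gives the claim.

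The routine parts are the telescoping bookkeeping (Lemmas \ref{crash erdoc c1}--\ref{crash erdoc c2} do the analytic work) and the counting of crash-affected windows. The main obstacle is the per-window scrambling estimate: extracting, from the pigeonhole-repeated reduced graph $\calH$ and the uniqueness of its source component, a single column that is simultaneously positive across all currently-alive rows --- and in particular correctly accounting for the up to $f$ vertices deleted in passing from $G$ to the reduced graph, which is exactly where the precise form of Proposition \ref{lower bound} (from \cite{tseng2015thesis}) must be invoked.
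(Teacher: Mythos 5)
Your proposal is correct and follows essentially the same route as the paper's proof: the trivial bound by $1$, the per-window estimate $\eta\ge\xi^{n\chi}$ obtained from Proposition \ref{lower bound} plus pigeonhole over the $\chi$ reduced graphs and the reachability of all live nodes from the unique source component, the telescoping via Lemmas \ref{crash erdoc c1} and \ref{crash erdoc c2} with the monotonicity \eqref{monotone 1}, and the observation that at most $f$ of the windows contain a crash. The only (immaterial) difference is that you anchor the $n\chi$-blocks backward from $t$ with a prefix near $r$, whereas the paper anchors them forward from $r$ with a leftover factor near $t$.
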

\begin{proof}
For any $i, j\in \bar{\calN}[t]$ and $k\in \calV$, it trivially holds that
\begin{align}
\label{upper bound 1}
\left | {\bf \Phi}_{ik}(t,r)-{\bf \Phi}_{jk}(t,r)\right |\le 1.
\end{align}
To prove this theorem, it is enough to show
\begin{align*}
\left | {\bf \Phi}_{ik}(t,r)-{\bf \Phi}_{jk}(t,r)\right |\le (1-\xi^{n\chi})^{\lfloor \frac{t-r+1}{n\chi}\rfloor -f}, \forall \, i, j\in \bar{\calN}[t].
\end{align*}
For $1\le k\le \lfloor \frac{t-r+1}{n\chi}\rfloor$, let
\begin{align}
\label{block}
{\bf Q}[k]={\bf \Phi}(r+n\chi k-1, r+n\chi (k-1)).
\end{align}
From Proposition \ref{lower bound}, we have
\begin{align}
\label{l1}
\nonumber
{\bf Q}[k]&={\bf A}[r+n\chi k-1]{\bf A}[r+n\chi k-2]\cdots {\bf A}[r+n\chi (k-1)]\\
&\ge~ \xi^{n\chi}\prod_{\tau=r+n\chi (k-1)}^{r+n\chi k-1} {\bf H}[\tau].
\end{align}
The matrix corresponding to
at least one reduced graph, say $\calH^*$, appears at least $n$ times in the product $\prod_{\tau=r+n\chi (k-1)}^{r+n\chi k-1} {\bf H}[\tau]$. Let $S\subseteq \calV$ be the unique source component of $\calH^*$. Due to the existence of self-loops, and the fact that only nodes {\em not} in $\bar{\calN}[\tau]$ can be the removed sink nodes (as per Definition \ref{reduced graph}) in $\calH[\tau]$,  we know that each $j\in S$  reaches every node in $\bar{\calN}[r+n\chi k-1]$. Thus, for $j\in S$
\begin{align}
\label{l2}
\qth{\prod_{\tau=r+n\chi (k-1)}^{r+n\chi k-1} {\bf H}[\tau]}_{ij}\ge ~1, ~\forall\, i\in \bar{\calN}[r+n\chi k-1].
\end{align}
By \eqref{l1} and \eqref{l2}, for  $i\in \bar{\calN}[r+n\chi k-1]$ and $j\in S$, we have
\begin{align}
\label{l3}
{\bf Q}_{ij}[k]~\ge ~\xi^{n\chi}.
\end{align}
Now let us consider the case when $\bar{\calN}[r+n\chi k-1]=\calN[r+n\chi (k-1)]$, that is, no agents crash through iteration $r+n\chi (k-1)$ to the end of iteration $r+n\chi k -1$.
\begin{align}
\label{l4}
\nonumber
&\eta_{r+n\chi (k-1)}\pth{{\bf Q}[k]}=\eta_{r+n\chi (k-1)}\pth{{\bf \Phi}(r+n\chi k-1, r+n\chi (k-1))}\\
\nonumber
&=\min_{i, i^{\prime}\in \calN[r+n\chi (k-1)]}\sum_{j\in \calV} \min \sth{{\bf \Phi}_{ij}\pth{r+n\chi k-1, r+n\chi (k-1)},\, {\bf \Phi}_{i^{\prime}j}\pth{r+n\chi k-1, r+n\chi (k-1)}}\\
\nonumber
&\ge \min_{i, i^{\prime}\in \calN[r+n\chi (k-1)]} \sum_{j\in S} \min \sth{{\bf \Phi}_{ij}\pth{r+n\chi k -1, r+n\chi (k-1)},\, {\bf \Phi}_{i^{\prime}j}\pth{r+n\chi k-1, r+n\chi (k-1)}}\\
\nonumber
&=\min_{i, i^{\prime}\in \bar{\calN}[r+n\chi k -1]} \sum_{j\in S} \min \sth{{\bf \Phi}_{ij}\pth{r+n\chi k-1, r+n\chi (k-1)},\, {\bf \Phi}_{i^{\prime}j}\pth{r+n\chi k-1, r+n\chi (k-1)}}\\
&\ge \min_{i, i^{\prime}\in \bar{\calN}[r+n\chi k -1]}\xi^{n\chi}=\xi^{n\chi}.
\end{align}
%
For $i, i^{\prime}\in \bar{\calN}[t]$, it holds that
\begin{align}
\label{u2}
\nonumber
&\left | {\bf \Phi}_{ij}(t,r)-{\bf \Phi}_{i^{\prime}j}(t,r)\right |\le \max_{j\in \calV} \max_{k,k^{\prime} \in \bar{\calN}[t]} \pth{{\bf \Phi}_{kj}(t, r)-{\bf \Phi}_{k^{\prime}j}(t, r)}\\
\nonumber
&=\max_{j\in \calV} \max_{k,k^{\prime} \in \calN[t+1]} \pth{{\bf \Phi}_{kj}(t, r)-{\bf \Phi}_{k^{\prime}j}(t, r)}~~~\text{since }\bar{\calN}[t]=\calN[t+1]\\
\nonumber
&=\delta_{t+1}\pth{{\bf \Phi}\pth{t, r}} ~~~\text{by \eqref{ergodic 1}}\\
\nonumber
&=\delta_{t+1}\pth{{\bf \Phi}\pth{t, r+n\chi \lfloor \frac{t-r+1}{n\chi}\rfloor}{\bf \Phi}\pth{r+n\chi \lfloor \frac{t-r+1}{n\chi}\rfloor-1, r}}\\
\nonumber
&\le \delta_{r+n\chi \lfloor \frac{t-r+1}{n\chi}\rfloor}\pth{{\bf \Phi}\pth{t, r+n\chi \lfloor \frac{t-r+1}{n\chi}\rfloor}{\bf \Phi}\pth{r+n\chi \lfloor \frac{t-r+1}{n\chi}\rfloor-1, r}} ~~~\text{by \eqref{monotone 1}}\\
\nonumber
&\le \pth{1-\eta_{r+n\chi \lfloor \frac{t-r+1}{n\chi}\rfloor}\pth{{\bf \Phi}\pth{t, r+n\chi \lfloor \frac{t-r+1}{n\chi}\rfloor}}}\delta_{r+n\chi \lfloor \frac{t-r+1}{n\chi}\rfloor} \pth{{\bf \Phi}\pth{r+n\chi \lfloor \frac{t-r+1}{n\chi}\rfloor-1, r}}\\
\nonumber
&\quad \quad \quad \quad \quad \quad \quad \quad \quad \quad \quad \quad \quad \quad \quad \quad\quad \quad \quad \quad \quad \quad \quad \quad
\quad \quad \quad \quad \quad \quad \quad \quad
\text{by Lemma \ref{crash erdoc c2}}\\
&\le \delta_{r+n\chi \lfloor \frac{t-r+1}{n\chi}\rfloor} \pth{{\bf \Phi}\pth{r+n\chi \lfloor \frac{t-r+1}{n\chi}\rfloor-1, r}}.
\end{align}
Similarly, we get
\begin{align}
\label{u31}
\nonumber
&\delta_{r+n\chi \lfloor \frac{t-r+1}{n\chi}\rfloor} \pth{{\bf \Phi}\pth{r+n\chi \lfloor \frac{t-r+1}{n\chi}\rfloor-1, r}}\\
\nonumber
&\le \delta_{r+n\chi \pth{\lfloor \frac{t-r+1}{n\chi}\rfloor-1}} \pth{{\bf \Phi}\pth{r+n\chi \lfloor \frac{t-r+1}{n\chi}\rfloor-1, r}}~~~\text{by \eqref{monotone 1}}\\
\nonumber
&\le \delta_{r+n\chi \pth{\lfloor \frac{t-r+1}{n\chi}\rfloor-1}} \pth{{\bf Q}[\lfloor \frac{t-r+1}{n\chi}\rfloor]{\bf \Phi}\pth{r+n\chi \pth{\lfloor \frac{t-r+1}{n\chi}\rfloor-1}-1, r}}~~~\text{by \eqref{block}}\\
\nonumber
&\le \pth{1-\eta_{r+n\chi \pth{\lfloor \frac{t-r+1}{n\chi}\rfloor-1}}{\bf Q}[\lfloor \frac{t-r+1}{n\chi}\rfloor]}\delta_{r+n\chi \pth{\lfloor \frac{t-r+1}{n\chi}\rfloor-1}} \pth{{\bf \Phi}\pth{r+n\chi \pth{\lfloor \frac{t-r+1}{n\chi}\rfloor-1}-1, r}}\\
&\quad \quad \quad \quad \quad \quad \quad \quad \quad \quad \quad \quad \quad \quad \quad \quad\quad \quad \quad \quad \quad \quad \quad \quad
\quad \quad \quad \quad \quad \quad \quad \quad
\text{by Lemma \ref{crash erdoc c2}}.
\end{align}
Repeatedly apply \eqref{u31}  $\lfloor \frac{t-r+1}{n\chi}\rfloor$ times,  we have
\begin{align}
\label{u3}
\delta_{r+n\chi \lfloor \frac{t-r+1}{n\chi}\rfloor} \pth{{\bf \Phi}\pth{r+n\chi \lfloor \frac{t-r+1}{n\chi}\rfloor-1, r}}\le \prod_{k=1}^{\lfloor \frac{t-r+1}{n\chi}\rfloor}\pth{1-\eta_{r+n\chi (k-1)}\pth{{\bf Q}[k]}},
\end{align}
where each ${\bf Q}[k]$, as defined in \eqref{block}, is a product of $n\chi$ matrices.

Define $k\in \{1, \ldots, \lfloor \frac{t-r+1}{n\chi}\rfloor\}\triangleq \calK$. Let $\bar{\calK}\subseteq \calK$ such that for each $k\in \bar{\calK}$, it holds that $\bar{\calN}[r+n\chi k-1]=\calN[r+n\chi (k-1)]$. Since at most $f$ agents crash, $\left | \bar{\calK}\right |\ge \lfloor \frac{t-r+1}{n\chi}\rfloor -f$. Thus, for $i, i^{\prime}\in \bar{\calN}[t]$ and for $j\in \calV$, we have
\begin{align}
\label{upper bound 2}
\nonumber
\left | {\bf \Phi}_{ij}(t,r)-{\bf \Phi}_{i^{\prime}j}(t,r)\right |&\le
\prod_{k=1}^{\lfloor \frac{t-r+1}{n\chi}\rfloor}\pth{1-\eta_{r+n\chi (k-1)}\pth{{\bf Q}[k]}}~~~\text{by \eqref{u3}}\\
\nonumber
&\le \prod_{k\in \bar{\calK}}\pth{1-\eta_{r+n\chi (k-1)}\pth{{\bf Q}[k]}}\\
\nonumber
&\le \prod_{k\in \bar{\calK}}\pth{1-\xi^{n\chi}}~~~\text{by \eqref{l4}}\\
&\le (1-\xi^{n\chi})^{\lfloor \frac{t-r+1}{n\chi}\rfloor -f}~~~\text{since }\left | \bar{\calK}\right |\ge \lfloor \frac{t-r+1}{n\chi}\rfloor -f
\end{align}

\vskip \baselineskip

Combining \eqref{upper bound 1} and \eqref{upper bound 2}, for $i, i^{\prime}\in \bar{\calN}[t]$ and for $j\in \calV$, we conclude that
\begin{align*}
\left | {\bf \Phi}_{ik}(t,r)-{\bf \Phi}_{jk}(t,r)\right |\le \min\{~1,~ (1-\xi^{n\chi})^{\lfloor \frac{t-r+1}{n\chi}\rfloor -f}\},
\end{align*}
proving the theorem.

\eproof \end{proof}

An immediate consequence of Theorem \ref{bound} is that \eqref{r weak ergodic} holds. Thus Condition 2 is also sufficient for \eqref{r weak ergodic} to hold. \\

Our next lemma states that rows of ${\bf \Phi}(t, r)$ with indices in $\calN$ have a common limit.
\begin{lemma}
\label{lc1}
For all $i\in \calN$, the $i$--th row of ${\bf \Phi}(t, r)$, denoted by ${\bf \Phi}_{i\cdot}(t, r)$ converges to a stochastic vector $\bm{\bm{\pi}}^T(r)$, i.e.,
\begin{align}
\label{limit}
{\bf \Phi}_{i\cdot}(t, r)~\to~ \bm{\bm{\pi}}^T(r), ~\forall ~r.
\end{align}
\end{lemma}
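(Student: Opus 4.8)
The plan is to upgrade the conclusion ``the rows of ${\bf \Phi}(t,r)$ with indices in $\calN$ become equal in the limit'' (which is \eqref{r weak ergodic}, already delivered by Theorem \ref{bound}) to the stronger assertion that each such row actually \emph{converges}. The extra ingredient needed is that for fixed $i\in\calN$ the scalar sequence $\pth{{\bf \Phi}_{ik}(t,r)}_{t\ge r}$ is Cauchy; I will get this by showing that for $t_2>t_1\ge r$ the value ${\bf \Phi}_{ik}(t_2,r)$ is a convex combination of the entries $\{{\bf \Phi}_{jk}(t_1,r): j\in\bar\calN[t_1]\}$, so that both ${\bf \Phi}_{ik}(t_2,r)$ and ${\bf \Phi}_{ik}(t_1,r)$ are trapped in an interval whose length is controlled by Theorem \ref{bound}.

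Fix $r\ge1$, $k\in\calV$, and $i\in\calN$, and recall $\calN\subseteq\bar\calN[t]=\calN[t+1]$ for every $t$. For $t_2>t_1\ge r$ use the semigroup identity ${\bf \Phi}(t_2,r)={\bf \Phi}(t_2,t_1+1){\bf \Phi}(t_1,r)$ to write ${\bf \Phi}_{ik}(t_2,r)=\sum_{j=1}^n {\bf \Phi}_{ij}(t_2,t_1+1){\bf \Phi}_{jk}(t_1,r)$. Since $i\in\calN\subseteq\calN[t_1+1]$, Proposition \ref{p3} (applied with $t=t_1+1$, $t'=t_2$) gives ${\bf \Phi}_{ij}(t_2,t_1+1)=0$ for every $j\notin\calN[t_1+1]$, so the sum runs only over $j\in\calN[t_1+1]=\bar\calN[t_1]$; combining this with \eqref{sum} (equivalently, row-stochasticity of ${\bf \Phi}(t_2,t_1+1)$ restricted to these columns) exhibits ${\bf \Phi}_{ik}(t_2,r)$ as a convex combination of $\{{\bf \Phi}_{jk}(t_1,r):j\in\bar\calN[t_1]\}$. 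Hence both ${\bf \Phi}_{ik}(t_2,r)$ and, trivially since $i\in\bar\calN[t_1]$, ${\bf \Phi}_{ik}(t_1,r)$ lie in the interval $\qth{\min_{j\in\bar\calN[t_1]}{\bf \Phi}_{jk}(t_1,r),\ \max_{j\in\bar\calN[t_1]}{\bf \Phi}_{jk}(t_1,r)}$, whose length is at most $\max_{j,j'\in\bar\calN[t_1]}\left|{\bf \Phi}_{jk}(t_1,r)-{\bf \Phi}_{j'k}(t_1,r)\right|\le (1-\xi^{n\chi})^{\lfloor(t_1-r+1)/(n\chi)\rfloor-f}$ by Theorem \ref{bound}.

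Consequently $\left|{\bf \Phi}_{ik}(t_2,r)-{\bf \Phi}_{ik}(t_1,r)\right|\le (1-\xi^{n\chi})^{\lfloor(t_1-r+1)/(n\chi)\rfloor-f}\to 0$ as $t_1\to\infty$, so $\pth{{\bf \Phi}_{ik}(t,r)}_t$ is Cauchy and converges to some $\bm\pi_k(r)\in[0,1]$. Applying Theorem \ref{bound} once more, for $i,i'\in\calN$ we have $\left|{\bf \Phi}_{ik}(t,r)-{\bf \Phi}_{i'k}(t,r)\right|\to 0$, so $\bm\pi_k(r)$ does not depend on the choice of $i\in\calN$; stacking over $k\in\calV$ yields a single vector $\bm\pi(r)$ with ${\bf \Phi}_{i\cdot}(t,r)\to\bm\pi^T(r)$ for all $i\in\calN$. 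Finally $\bm\pi(r)$ is stochastic: its entries are nonnegative as limits of nonnegative numbers, and since $\calV$ is finite and each ${\bf \Phi}(t,r)$ is row-stochastic, $\sum_{k\in\calV}\bm\pi_k(r)=\lim_{t\to\infty}\sum_{k\in\calV}{\bf \Phi}_{ik}(t,r)=1$. The one point that requires care is that the convex combination produced in the second step involve \emph{only} indices in $\bar\calN[t_1]$, so that the exponential bound of Theorem \ref{bound} applies to the interval length; this is exactly what Proposition \ref{p3} provides, and it is what makes the (at most $f$) crashing agents harmless for this argument. A monotone-bounded-sequence argument (tracking $\max_{j\in\calN}{\bf \Phi}_{jk}(t,r)$ and $\min_{j\in\calN}{\bf \Phi}_{jk}(t,r)$ past the iteration after which $\calN[t]=\calN$) would also work, but the Cauchy route above avoids introducing that stabilization time.
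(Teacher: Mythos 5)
Your proposal is correct and is essentially the paper's own argument: the paper also proves the Cauchy property by writing ${\bf \Phi}_{ij}(t+p,r)-{\bf \Phi}_{ij}(t,r)=\sum_{k\in\calN[t+1]}{\bf \Phi}_{ik}(t+p,t+1)\pth{{\bf \Phi}_{kj}(t,r)-{\bf \Phi}_{ij}(t,r)}$ via the semigroup identity, Proposition \ref{p3}, and row-stochasticity, then bounds the bracket by Theorem \ref{bound}. Your ``convex combination trapped in an interval'' phrasing is the same computation in different words, and the concluding steps (limit independent of $i\in\calN$, stochasticity of $\bm{\pi}(r)$) match as well.
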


Indeed, for each $i\in \calN$, the row ${\bf \Phi}_{i\cdot} (t, r)$ converges to $\bm{\pi}(r)$ exponentially fast.
\begin{proposition}
\label{lc3}
Suppose $r\le t$. For any $i\in \calN$ and for any $k\in \calV$, it holds that
\begin{align*}
\left | {\bf \Phi}_{ik}(t,r)-\pi_k(r)\right |\le \min\{~1,~ (1-\xi^{n\chi})^{\lfloor \frac{t-r+1}{n\chi}\rfloor -f}\}.
\end{align*}
\end{proposition}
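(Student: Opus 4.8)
The plan is to derive this from three already-established facts: Lemma~\ref{lc1} (the rows indexed by $\calN$ share the common limit $\bm{\pi}(r)$), Proposition~\ref{p3} (entries of ${\bf \Phi}$ from a node alive at time $t$ to a node already dead by time $t$ vanish), and Theorem~\ref{bound} (row differences with indices in $\bar{\calN}[t]$ contract at the stated exponential rate). The only structural input I need beforehand is that the sets $\bar{\calN}[t]=\calN[t+1]$ are non-increasing in $t$ with $\lim_{t\diverge}\bar{\calN}[t]=\calN$, so that $\calN\subseteq\bar{\calN}[t]$ for every $t$; in particular any $i\in\calN$ is a legitimate index in $\bar{\calN}[t]$.

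Concretely, fix $r$, $i\in\calN$ and $k\in\calV$. For every $t'\ge t+1$ I would use the semigroup identity ${\bf \Phi}(t',r)={\bf \Phi}(t',t+1){\bf \Phi}(t,r)$ to write ${\bf \Phi}_{ik}(t',r)=\sum_{l=1}^n{\bf \Phi}_{il}(t',t+1){\bf \Phi}_{lk}(t,r)$. Since $i\in\calN\subseteq\calN[t+1]$, Proposition~\ref{p3} (invoked with its parameter taken to be $t+1$) gives ${\bf \Phi}_{il}(t',t+1)=0$ for every $l\notin\calN[t+1]=\bar{\calN}[t]$; as ${\bf \Phi}(t',t+1)$ is row stochastic, the right-hand side is a genuine convex combination of the numbers $\{{\bf \Phi}_{lk}(t,r):l\in\bar{\calN}[t]\}$, so ${\bf \Phi}_{ik}(t',r)$ lies in the closed interval $\bigl[\min_{l\in\bar{\calN}[t]}{\bf \Phi}_{lk}(t,r),\ \max_{l\in\bar{\calN}[t]}{\bf \Phi}_{lk}(t,r)\bigr]$, uniformly in $t'\ge t+1$. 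Letting $t'\diverge$ and applying Lemma~\ref{lc1} shows that $\pi_k(r)$ lies in this same interval; and so does ${\bf \Phi}_{ik}(t,r)$ itself, because $i\in\bar{\calN}[t]$. Hence $\bigl|{\bf \Phi}_{ik}(t,r)-\pi_k(r)\bigr|$ is at most the length of that interval, namely $\max_{l,l'\in\bar{\calN}[t]}\bigl({\bf \Phi}_{lk}(t,r)-{\bf \Phi}_{l'k}(t,r)\bigr)$, which Theorem~\ref{bound} bounds by $\min\{1,(1-\xi^{n\chi})^{\lfloor\frac{t-r+1}{n\chi}\rfloor-f}\}$ --- exactly the asserted estimate. (Equivalently, that last quantity equals $\delta_{t+1}({\bf \Phi}(t,r))$ by \eqref{ergodic 11}, which was already bounded inside the proof of Theorem~\ref{bound}.)

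I do not anticipate a real obstacle here: the argument introduces no new estimate and is entirely a corollary of Theorem~\ref{bound}. The only points deserving a moment of care are the index-set bookkeeping --- ensuring $\calN\subseteq\bar{\calN}[t]=\calN[t+1]$ so that Proposition~\ref{p3} applies to the factor ${\bf \Phi}(t',t+1)$ and annihilates its entries outside $\bar{\calN}[t]$ --- and the elementary observation that the limit of a sequence whose terms all lie in one fixed closed interval again lies in that interval.
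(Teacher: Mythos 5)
Your proposal is correct and follows essentially the same route as the paper: both decompose ${\bf \Phi}(t',r)={\bf \Phi}(t',t+1){\bf \Phi}(t,r)$, use Proposition~\ref{p3} and row-stochasticity to reduce to a convex combination over $\bar{\calN}[t]$, pass to the limit via Lemma~\ref{lc1}, and invoke Theorem~\ref{bound}. Your ``both points lie in the interval spanned by $\{{\bf \Phi}_{lk}(t,r):l\in\bar{\calN}[t]\}$'' phrasing is just a compact repackaging of the paper's direct two-sided estimate.
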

In addition, the limit vector $\bm{\pi}(r)$ has the following property.
\begin{lemma}
\label{lc2}
For any $r$, there exists a reduced graph $\tilde{\calH}[r]$ with source component $S_r$ such that $\pi_j(r)\ge \xi^{n\chi}$, for each $ j\in S_r.$
In addition, $|S_r|\ge \gamma$.
\end{lemma}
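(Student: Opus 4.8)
The plan is to combine the exponential-convergence machinery (Theorem \ref{bound} and Proposition \ref{lc3}) with the structural estimate \eqref{l4} that produced the constant $\xi^{n\chi}$ in the first place. Fix $r$. By Lemma \ref{lc1} the rows ${\bf \Phi}_{i\cdot}(t,r)$ with $i\in\calN$ all converge to the common stochastic vector $\bm{\pi}(r)^T$ as $t\diverge$. The idea is to inspect a single well-chosen block ${\bf Q}[k]={\bf \Phi}(r+n\chi k-1,\,r+n\chi(k-1))$, show that it has a column block ($j$ ranging over the source component $S=S_{\calH^*}$ of some reduced graph $\calH^*$) whose entries are uniformly bounded below by $\xi^{n\chi}$ for all rows $i\in\bar\calN[\cdot]$, then propagate this lower bound forward through the remaining products ${\bf \Phi}(t,\cdot)$ using row-stochasticity, and finally pass to the limit.

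First I would recall exactly what \eqref{l1}--\eqref{l3} give: inside any window of length $n\chi$ some reduced-graph matrix ${\bf H}[\tau]$ appears at least $n$ times, its (unique, by Condition~1) source component $S$ reaches every surviving node, and hence ${\bf Q}_{ij}[k]\ge \xi^{n\chi}$ for every $i\in\bar\calN[r+n\chi k-1]$ and every $j\in S$. Moreover $|S|\ge\gamma$ by the definition \eqref{minimal source} of $\gamma$. Choose $k$ so that no crash occurs during that window — as in the proof of Theorem \ref{bound}, at most $f$ windows are "bad", so for $t$ large enough such a $k$ exists; call the corresponding reduced graph $\tilde\calH[r]$ and its source component $S_r$. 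Now write ${\bf \Phi}(t,r)={\bf \Phi}(t,\,r+n\chi k)\,{\bf Q}[k]\,{\bf \Phi}(r+n\chi(k-1)-1,\,r)$ and expand the $(i,j)$ entry for $i\in\calN$ and $j\in S_r$ as a sum over the two outer factors times ${\bf Q}[k]$. Since all surviving rows of ${\bf Q}[k]$ have mass at least $\xi^{n\chi}$ on the $S_r$-columns, and since ${\bf \Phi}(t,\,r+n\chi k)$ is row-stochastic with support (by Proposition \ref{p3}) inside the non-crashed set, one gets $\sum_{j\in S_r}{\bf \Phi}_{ij}(t,\,r+n\chi k-1)\ge\xi^{n\chi}$ for every $i\in\bar\calN[r+n\chi k-1]$; combined with the left-multiplication by the later row-stochastic factor ${\bf \Phi}(t,\,r+n\chi k)$ this lower bound on the aggregate $S_r$-mass is preserved, so $\sum_{j\in S_r}{\bf \Phi}_{ij}(t,r)\ge\xi^{n\chi}$ for all $t$ beyond the window and all $i\in\calN$.

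It remains to transfer an aggregate lower bound $\sum_{j\in S_r}{\bf \Phi}_{ij}(t,r)\ge\xi^{n\chi}$ into an entrywise bound $\pi_j(r)\ge\xi^{n\chi}$ for each $j\in S_r$. Here I would use that within the window one in fact has ${\bf \Phi}_{ij}[\text{block}]\ge\xi^{n\chi}$ for each individual $j\in S_r$ (not just in aggregate), because \eqref{l2} holds coordinatewise, and then note that later row-stochastic multiplications can only redistribute mass — but Proposition \ref{lc3} forces the rows to collapse onto $\bm{\pi}(r)$, so the per-coordinate lower bound on $S_r$ survives in the limit as long as we verify it is not destroyed by the intermediate steps. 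The cleanest route: apply the block estimate to the window farthest to the right (closest to $t$) rather than a fixed one, so that essentially no further multiplication occurs after it, obtain ${\bf \Phi}_{ij}(t,r)\ge\xi^{n\chi}$ directly for $t$ in that window and $j\in S_r$, and then invoke Proposition \ref{lc3} (the exponential bound $|{\bf \Phi}_{ij}(t,r)-\pi_j(r)|\le(1-\xi^{n\chi})^{\lfloor(t-r+1)/n\chi\rfloor-f}$) to let $t\diverge$ and conclude $\pi_j(r)\ge\xi^{n\chi}$ for each $j\in S_r$; since $S_r$ is the source component of the reduced graph $\tilde\calH[r]$ we also have $|S_r|\ge\gamma$ by \eqref{minimal source}.

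The main obstacle I anticipate is the bookkeeping around "which window": the block lower bound from \eqref{l1}--\eqref{l4} applies to a window in which no crash occurs and in which a repeated reduced-graph matrix is present, and one must argue that for every sufficiently large $t$ such a window exists arbitrarily close to $t$ (so that the reduced graph $\tilde\calH[r]$ and its source component $S_r$ can be chosen independently of $t$, by finiteness of $\calC$ and a pigeonhole/subsequence argument), and that the mass placed on $S_r$ by that window is not leaked away by the — possibly non-monotone, crash-affected — multiplications that follow. Handling this requires care because $\calR_i[t]$ and $\calN[t]$ are not monotone; the saving grace is Proposition \ref{p3}, which pins the support of every ${\bf \Phi}(t',t)$ inside $\calN[t]$, together with \eqref{sum} (row sums restricted to $\calN[t]$ equal $1$), so no mass escapes to crashed coordinates. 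Once the window is fixed and $S_r$ identified, the remaining steps are routine row-stochastic estimates and a limit passage via Proposition \ref{lc3}.
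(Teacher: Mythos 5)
Your ingredients are the right ones (the block estimate \eqref{l2}--\eqref{l3}, Proposition \ref{p3}, and row-stochasticity), but the key step --- which window to use and in which direction the bound propagates --- is the wrong way around, and that is where the argument breaks. In the backward product ${\bf \Phi}(t,r)={\bf A}[t]\cdots{\bf A}[r]$ the column index $j$ of ${\bf \Phi}(t,r)$ lives at time $r$, so a columnwise lower bound on a block ${\bf Q}[k]$ transfers to ${\bf \Phi}(t,r)$ only if that block is the \emph{earliest} factor, i.e.\ the window $[r,\,r+n\chi-1]$ immediately after $r$. Your ``cleanest route'' picks the window closest to $t$; then ${\bf \Phi}(t,r)={\bf Q}\,{\bf \Phi}(\cdot,r)$ and ${\bf \Phi}_{ij}(t,r)=\sum_k {\bf Q}_{ik}{\bf \Phi}_{kj}(\cdot,r)\ge \xi^{n\chi}\sum_{k\in S}{\bf \Phi}_{kj}(\cdot,r)$, which would require a lower bound on \emph{column} sums of the earlier, merely row-stochastic, product --- not available. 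The same misalignment afflicts the intermediate-window version: for $k>1$ the set $S_r$ indexes coordinates at time $r+n\chi(k-1)$, not at time $r$, so neither the entrywise nor the aggregate bound passes through the remaining right factor ${\bf \Phi}(r+n\chi(k-1)-1,\,r)$. This misplacement is also what creates the $t$-dependence of $S_r$ that you then try to remove with a pigeonhole/subsequence argument.

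The paper's proof avoids all of this by taking the first window: write ${\bf \Phi}(t,r)={\bf \Phi}(t,r+n\chi)\,{\bf \Phi}(r+n\chi-1,r)$, note that ${\bf \Phi}_{kj}(r+n\chi-1,r)\ge\xi^{n\chi}$ for every $j\in S_r$ and every $k\in\bar{\calN}[r+n\chi-1]=\calN[r+n\chi]$, and that Proposition \ref{p3} together with \eqref{sum} pins the support of the row ${\bf \Phi}_{i\cdot}(t,r+n\chi)$ inside $\calN[r+n\chi]$ with total mass $1$; hence ${\bf \Phi}_{ij}(t,r)\ge\xi^{n\chi}\sum_{k\in\calN[r+n\chi]}{\bf \Phi}_{ik}(t,r+n\chi)=\xi^{n\chi}$ for all $t$, and letting $t\diverge$ gives $\pi_j(r)\ge\xi^{n\chi}$. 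Note also that no crash-free window is needed: \eqref{l2}--\eqref{l3} hold for rows in $\bar{\calN}[r+n\chi-1]$ whether or not crashes occur inside the window (the crash-free hypothesis was only used for the $\eta$-bound \eqref{l4} in Theorem \ref{bound}, which quantifies over the larger set $\calN[r+n\chi(k-1)]$). With the window anchored at $r$, the reduced graph $\tilde{\calH}[r]$ and $S_r$ depend only on $r$, and $|S_r|\ge\gamma$ is immediate from \eqref{minimal source}.
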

The proofs of Lemma \ref{lc1}, Proposition \ref{lc3} and Lemma \ref{lc2} can be found in Appendix \ref{app:lc}.

\section{Convergence of Non-Bayesian Learning}
\label{sec: almost sure}
In the absence of agent failures and messages asynchrony \cite{jadbabaie2012non}, for the networked agents to detect the true hypothesis $\theta^*$, it is enough to assume that $G(\calV, \calE)$ is strongly connected, and that $\theta^*$ is globally identifiable, i.e., for any
$\theta, \theta^*\in \Theta$ and $\theta\not=\theta^*$, there exists a node $j\in \calV$ such that
$$
D \pth{\ell_j(\cdot |\theta^*)||\ell_j(\cdot |\theta)}\triangleq \sum_{w_j\in \calS_j}\ell_j(w_j|\theta^*)\log \frac{\ell_j(w_j|\theta^*)}{\ell_j(w_j|\theta)}~\not=~0,
$$
or equivalently,
\begin{align}
\label{failure-free identify}
\sum_{i\in \calV} D \pth{\ell_i(\cdot |\theta^*)||\ell_i(\cdot |\theta)}~\not=~0.
\end{align}
Since $\theta^*$ may change from execution to execution, \eqref{failure-free identify} is required to hold for any choice of $\theta$ and $\theta^*$ such that $\theta\not=\theta^*$.
Intuitively speaking, if any pair of states $\theta_1$ and $\theta_2$ can be distinguished by at least one agent in the network, then sufficient exchange of local beliefs over strongly connected network will enable every agent to distinguish $\theta_1$ from $\theta_2$. However, in the presence of failures and asynchrony, the effective influence network (reduced graphs) may not be strongly connected. Thus, stronger global identifiability of the network is required.
\begin{assumption}
\label{ass}
For any $\theta, \theta^* \in \Theta$, $\theta\not=\theta^*$, and for any reduced graph $\calH$ of $G(\calV, \calE)$ with $\calS_{\calH}$ denoting the unique source component, the following holds.
\begin{align}
\label{failure identify}
\sum_{i\in \calS_{\calH}} D\pth{\ell_i(\cdot |\theta^*)||\ell_i(\cdot |\theta)}~\not=~0.
\end{align}
\end{assumption}
In contrast to \eqref{failure-free identify}, where the summation is taken over all the agents in the network, in \eqref{failure identify},  the summation is taken over the source component of an arbitrary reduced graph. Intuitively, the condition imposed by Assumption \ref{ass} is that all the agents in the source component can detect the true state $\theta^*$ collaboratively. Then if iterative consensus, specified in \eqref{r weak ergodic}, is achieved, the accurate belief can be propagated from the source component to every non-faulty agent in the network.
Assumption \ref{ass} is necessary for (\ref{goal}) to be achievable. This is because if Assumption \ref{ass} is violated, no information outside the source component is available to the agents in the component to distinguish two hypotheses, and $\theta^*$ cannot be learned.
In addition, Assumption \ref{ass} is also sufficient for $\theta^*$ to be learned. Our convergence analysis of the proposed protocol has a similar structure to that in \cite{nedic2014nonasymptotic}.

Recall that $\calC$ is the collection of all reduced graphs of $G(\calV, \calE)$. Let $\calH\in \calC$ with source component $\calS_{\calH}$.
Define $C_0$ and $C_1$ as
\begin{align}
\label{c0}
-C_0\triangleq  \min_{i\in \calV} \min_{\theta_1, \theta_2\in \Theta; \theta_1\not= \theta_2} \min_{w_i\in \calS_i} \pth{\log \frac{\ell_i(w_i|\theta_1)}{\ell_i(w_i|\theta_2)}},
\end{align}
and
\begin{align}
\label{c1}
C_1\triangleq \min_{\calH\in \calC} \min_{\theta_1, \theta_2\in \Theta; \theta_1\not= \theta_2}\sum_{i\in \calS_{\calH}} D\pth{\ell_i(\cdot |\theta_1)||\ell_i(\cdot |\theta_2)}.
\end{align}
Since $|\Theta|=m< \infty$ and the finiteness of $|\calS_i|$ for each $i\in \calV$, we know that $-C_0>-\infty$. In addition, it is easy to see that $-C_0\le 0$ (thus, $C_0\ge 0$). From Assumption \ref{ass} and the fact that $|\calC|=\chi<\infty$, we get $C_1>0.$

\begin{theorem}
\label{almost sure}
When $G(\calV, \calE)$ satisfies Condition 1, and Assumption 1 holds, each non-faulty agent $i\in \calN$ learns the true hypothesis $\theta^*$ almost surely, i.e., $$\mu_t^i(\theta^*) \toas 1, ~~\forall\, i\in \calN.$$
\end{theorem}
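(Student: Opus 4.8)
The plan is to read off the dynamics from the expansion \eqref{int1}. Since the initial beliefs are uniform, $\bm{\psi}_0(\theta)=\zeros$, so \eqref{int1} reduces to $\bm{\psi}_t^i(\theta)=\sum_{r=1}^{t}\sum_{j\in\calV}{\bf \Phi}_{ij}(t,r+1)\,\calL_r^j(\theta)$. By the observation following \eqref{b1}, it suffices to prove $\bm{\psi}_t^i(\theta)\toas -\infty$ for every non-faulty $i\in\calN$ and every $\theta\neq\theta^*$; a union bound over these finitely many pairs $(i,\theta)$ then gives $\mu_t^i(\theta^*)\toas 1$ for all $i\in\calN$. I would fix the execution throughout, so that the matrices ${\bf A}[t]$, the set $\calN$, and the reduced graphs $\tilde{\calH}[r]$ with source components $S_r$ furnished by Lemma \ref{lc2} are all deterministic, while the probability is taken over the \iid signal profiles $\{{\bf s}_r\}_{r\ge 1}$, which are independent of the execution.

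The next step is to split the sum along the common limit rows $\bm{\pi}(r+1)$ of Lemma \ref{lc1}:
\begin{align*}
\bm{\psi}_t^i(\theta)=\sum_{r=1}^{t}\sum_{j\in\calV}\pi_j(r+1)\,\calL_r^j(\theta)+\sum_{r=1}^{t}\sum_{j\in\calV}\pth{{\bf \Phi}_{ij}(t,r+1)-\pi_j(r+1)}\calL_r^j(\theta),
\end{align*}
and call the first sum $I_t$ and the second $II_t$. The error term $II_t$ is controlled deterministically: Proposition \ref{lc3} gives $|{\bf \Phi}_{ij}(t,r+1)-\pi_j(r+1)|\le(1-\xi^{n\chi})^{\lfloor (t-r)/(n\chi)\rfloor-f}$ for $r<t$, \eqref{c0} gives $|\calL_r^j(\theta)|\le C_0$, and summing the resulting series (each exponent value occurs at most $n\chi$ times, and the $r=t$ term is $O(1)$ since ${\bf \Phi}(t,t+1)={\bf I}$) bounds $|II_t|$ by a finite constant that is uniform in $t$ and in the execution. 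Hence $II_t=O(1)$.

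The substance lies in the main term $I_t$, which I would decompose into a conditional mean and a fluctuation. For $j\in\bar{\calN}[r]$ one has $\Expect[\calL_r^j(\theta)]=-D\pth{\ell_j(\cdot|\theta^*)\,\|\,\ell_j(\cdot|\theta)}\le 0$, while $\calL_r^j(\theta)=0$ for $j\notin\bar{\calN}[r]$. By Proposition \ref{p3}, $\pi_j(r+1)=0$ for $j\notin\calN[r+1]=\bar{\calN}[r]$, and by Lemma \ref{lc2} there is a reduced graph $\tilde{\calH}[r+1]\in\calC$ whose source component $S_{r+1}\subseteq\bar{\calN}[r]$ satisfies $\pi_j(r+1)\ge\xi^{n\chi}$ for $j\in S_{r+1}$. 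Discarding all terms outside $S_{r+1}$ (every summand being nonnegative) and invoking the definition \eqref{c1} of $C_1$ gives, for each $r$,
\begin{align*}
\sum_{j\in\calV}\pi_j(r+1)\,\Expect[\calL_r^j(\theta)]\le -\xi^{n\chi}\sum_{j\in S_{r+1}}D\pth{\ell_j(\cdot|\theta^*)\,\|\,\ell_j(\cdot|\theta)}\le -\xi^{n\chi}C_1<0,
\end{align*}
so the mean part of $I_t$ is at most $-t\,\xi^{n\chi}C_1\to -\infty$. The fluctuation part equals $\sum_{r=1}^{t}X_r$ with $X_r\triangleq\sum_{j\in\calV}\pi_j(r+1)\pth{\calL_r^j(\theta)-\Expect[\calL_r^j(\theta)]}$; since the profiles ${\bf s}_r$ are independent across $r$, the $X_r$ are independent, have mean zero, and satisfy $|X_r|\le 2C_0$ because $\bm{\pi}(r+1)$ is a probability vector. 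Kolmogorov's strong law of large numbers then yields $\frac{1}{t}\sum_{r=1}^{t}X_r\toas 0$, \ie the fluctuation is $o(t)$ almost surely.

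Combining the pieces, $\bm{\psi}_t^i(\theta)=I_t+II_t\le -t\,\xi^{n\chi}C_1+o(t)+O(1)\toas -\infty$, which is exactly what is needed. I expect the main obstacle to be the careful bookkeeping after fixing the execution (which determines all the graph-theoretic quantities): one must verify that $\bm{\pi}(r+1)$ is supported inside $\bar{\calN}[r]$ and is bounded below on a source component of a genuine reduced graph, so that the uniform positive lower bound $C_1$ of Assumption \ref{ass} legitimately applies; keep the index shifts in Proposition \ref{lc3} and Lemma \ref{lc2} aligned; and justify the independence needed for the strong law. Everything else is routine estimation.
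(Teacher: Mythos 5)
Your proposal is correct and follows essentially the same route as the paper's proof: the same decomposition of $\bm{\psi}_t^i(\theta)$ into a $({\bf \Phi}-\bm{\pi})$ error term bounded via Proposition \ref{lc3}, a drift term bounded by $-t\,\xi^{n\chi}C_1$ via Lemma \ref{lc2} and Assumption \ref{ass}, and a zero-mean fluctuation handled by Kolmogorov's strong law. The only blemish is the parenthetical ``every summand being nonnegative'' when discarding terms outside $S_{r+1}$ --- those summands are nonpositive, which is precisely why discarding them yields the upper bound you write; the displayed inequality itself is correct.
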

\begin{proof}
Consider any $\theta\in \Theta$ such that $\theta\not=\theta^*$, where $\theta^*$ is the underlying true state. Recall from \eqref{int1} that
\begin{align*}
\bm{\psi}_{t}(\theta)={\bf \Phi}(t,1)\bm{\psi}_0(\theta)+\sum_{r=1}^{t} {\bf \Phi}(t,r+1)\calL_r(\theta)。
\end{align*}
Define $H(\theta)\in \reals^n$ such that
\begin{align}
\label{h}
H_i(\theta)~\triangleq ~ \sum_{w_i\in \calS_i} \ell_i(w_i|\theta^*) \log \frac{\ell_i(w_i|\theta)}{\ell_i(w_i|\theta^*)}=-D \pth{\ell_i(\cdot |\theta^*)||\ell_i(\cdot |\theta)}.
\end{align}
Since $D \pth{\ell_i(\cdot |\theta^*)||\ell_i(\cdot |\theta)}\ge 0$, it follows that $H_i(\theta)\le 0$. \footnote{Alternatively, since $\log (\cdot)$ is concave, by Jensen's inequality, we get \\$H_i(\theta)=\sum_{w_i\in \calS_i} \ell_i(w_i|\theta^*)\log\frac{\ell_i(w_i|\theta)}{\ell_i(w_i|\theta^*)}
\le \log \pth{\sum_{w_i\in \calS} \ell_i(w_i|\theta^*)\frac{\ell_i(w_i|\theta)}{\ell_i(w_i|\theta^*)}}=\log 1=0$.} Note that, for each $t\ge 1$ and for each $i\in \bar{\calN}[t]$, it holds that
\begin{align}
\label{expect}
H_i(\theta) = \mathbb{E}^*[\calL_t^i(\theta)],
\end{align}
where the expectation $\mathbb{E}^*[\cdot]$ is taken over $\ell_i(\cdot |\theta^*)$. Since the support of $\ell_i(\cdot|\theta)$ is the whole signal space $\calS_i$ for each agent $i\in \calV$, it holds that $\left |\frac{\ell_i(w_i|\theta)}{\ell_i(w_i|\theta^*)}\right |<\infty$ for each $w_i\in \calS_i$, and
\begin{align}
\label{finite of H}
\nonumber
H_i(\theta)&= \sum_{w_i\in \calS_i} \ell_i(w_i|\theta^*)\log\frac{\ell_i(w_i|\theta)}{\ell_i(w_i|\theta^*)}\ge \min_{w_i\in \calS_i} \pth{\log \frac{\ell_i(w_i|\theta)}{\ell_i(w_i|\theta^*)}}~~~\text{since }\sum_{w_i\in \calS_i}\ell_i(w_i|\theta^*)=1 \\
&\ge \min_{i\in \calV}\min_{\theta_1, \theta_2\in \Theta; \theta_1\not= \theta_2} \min_{w_i\in \calS_i} \pth{\log \frac{\ell_i(w_i|\theta_1)}{\ell_i(w_i|\theta_2)}} = ~-C_0 >-\infty.
\end{align}
Recall that $\bm{\bm{\pi}}(r+1)\in \reals^n$ is the limit of  $[{\bf \Phi}(t,r+1)]_{i \cdot}$ for each $i\in \calN$, and is stochastic. By \eqref{finite of H}, we know that
\begin{align*}
-\infty <-C_0\le \bm{ \bm{\pi}}^{T}(r+1) H(\theta)\le 0.
\end{align*}
Due to the finiteness of $\bm{ \bm{\pi}}^{T}(r+1) H(\theta)$, we are able to add $\sum_{r=1}^{t} \ones_n \bm{\pi}^{T}(r+1) H(\theta)$ and subtract $\sum_{r=1}^{t} \ones_n \bm{\pi}^{T}(r+1) H(\theta)$ from \eqref{int1}, where $\ones_n\in \reals^n$ is an all one vector of dimension $n$. So we get
\begin{align}
\label{int2}
\nonumber
\bm{\psi}_{t}(\theta)&={\bf \Phi}(t,1)\bm{\psi}_0(\theta)+\sum_{r=1}^{t} {\bf \Phi}(t,r+1)\calL_r(\theta)\\
\nonumber
&={\bf \Phi}(t,1)\bm{\psi}_0(\theta)+\sum_{r=1}^{t} {\bf \Phi}(t,r+1)\calL_r(\theta)-\sum_{r=1}^{t} \ones_n \bm{\pi}^{T}(r+1) H(\theta)+\sum_{r=1}^{t} \ones_n \bm{\pi}^{T}(r+1) H(\theta)\\
&={\bf \Phi}(t,1)\bm{\psi}_0(\theta)+\sum_{r=1}^{t}\pth{{\bf \Phi}(t,r+1)\calL_r(\theta)-\ones_n \bm{\pi}^{T}(r+1) H(\theta)}+\sum_{r=1}^{t} \ones_n \bm{\pi}^{T}(r+1) H(\theta).
\end{align}
For each $i\in \calN$, we have
\begin{align}
\label{evo}
\nonumber
\bm{\psi}^i_{t}(\theta)&=\sum_{k=1}^n{\bf \Phi}_{ik}(t,1)\bm{\psi}^i_0(\theta)
+\sum_{r=1}^{t}\pth{\sum_{k=1}^n{\bf \Phi}_{ik}(t,r+1)\calL^k_r(\theta)- \sum_{k=1}^n\pi_k(r+1)H_k(\theta)}\\
&\quad+\sum_{r=1}^{t}  \sum_{k=1}^n\pi_k(r+1) H_k(\theta).
\end{align}
To show $\lim_{t\diverge}\mu_t^i(\theta^*) \toas 1$, it is enough to show $\bm{\psi}^i_{t}(\theta)\toas -\infty$ for $\theta\not=\theta^*$.
The remaining proof of convergence has similar structure as the analysis in \cite{nedic2014nonasymptotic}.
We will bound the three terms in the right hand side of (\ref{evo}) separately.
Since $\mu_0^i$ is uniform, $\bm{\psi}^i_0(\theta)=0$.
We show that the second term of the RHS of \eqref{evo} decreases linearly in $t$. For each $r\le t$, let $S_r$ be the set of agents that has the property stated in Lemma \ref{lc2}. Then, we have
\begin{align*}
\sum_{k=1}^n\pi_k(r+1) H_k(\theta)&\le \sum_{k\in S_{r+1}}\pi_k(r+1) H_k(\theta)~~~\text{since } H_k(\theta)\le 0\\
&=-\sum_{k\in S_{r+1}}\pi_k(r+1) D(\ell_k(\cdot|\theta^*)||\ell_k(\cdot|\theta))~~~\text{by \eqref{h}}\\
&\le -\xi^{n\chi}\sum_{k\in S_{r+1}}D(\ell_k(\cdot|\theta^*)||\ell_k(\cdot|\theta))~~~\text{by Lemma \ref{lc2}}\\
&\le -\xi^{n\chi}C_1 ~~~\text{by \eqref{c1}}
\end{align*}
Thus, we get
\begin{align}
\label{conv4}
\sum_{r=1}^{t} \sum_{k=1}^n\pi_k(r+1) H_k(\theta)\le-\sum_{r=1}^{t} \sum_{j\in S_r}\xi^{n\chi} |H_j(\theta)|\le -C_1\xi^{n\chi}t.
\end{align}

Using Kolmogorov's strong law of large number, as shown below, we can prove that
\begin{align*}
\frac{1}{t}\sum_{r=1}^{t}\pth{\sum_{k=1}^n{\bf \Phi}_{ik}(t,r)\calL^k_r(\theta)-\sum_{k=1}^n\pi_k(r+1) H_k(\theta)} \toas 0.
\end{align*}
Specifically,
\begin{align}
\label{second}
\nonumber
&\frac{1}{t}\sum_{r=1}^{t}\pth{\sum_{k=1}^n{\bf \Phi}_{ik}(t,r+1)\calL^k_r(\theta)- \sum_{k=1}^n\pi_k(r+1)H_k(\theta)}\\
\nonumber
&=\frac{1}{t}\sum_{r=1}^{t}\pth{\sum_{k=1}^n{\bf \Phi}_{ik}(t,r+1)\calL^k_r(\theta)-
\sum_{k=1}^n \pi_k(r+1)\calL^k_r(\theta)+\sum_{k=1}^n \pi_k(r+1)\calL^k_r(\theta)- \sum_{k=1}^n\pi_k(r+1)H_k(\theta)}\\
&=\frac{1}{t}\sum_{r=1}^{t}\sum_{k=1}^n\pth{{\bf \Phi}_{ik}(t,r+1)-\pi_k(r+1)}\calL^k_r(\theta)+\frac{1}{t}\sum_{r=1}^{t}\pth{\sum_{k=1}^n \pi_k(r+1)\calL^k_r(\theta)- \sum_{k=1}^n\pi_k(r+1)H_k(\theta)}.
\end{align}
For each $k\notin \bar{\calN}[t]$, by \eqref{private signal}, we know that
\begin{align}
\label{d1}
\left |\calL^k_r(\theta)\right |=0.
\end{align}
For each $k\in \bar{\calN}[t]$, by \eqref{private signal}, we have
\begin{align*}
\left |\calL^k_r(\theta)\right |=\left |\log \frac{\ell_i(s_{t}^i|\theta)}{\ell_i(s_{t}^i|\theta^*)}\right |\le \max_{i\in \calV} \max_{\theta_1, \theta_2 \in \Theta; \theta_1\not= \theta_2}
\max_{w_i\in \calS_i} \left |\log \frac{\ell_i(w_i|\theta_1)}{\ell_i(w_i|\theta_2)}\right |.
\end{align*}
Note that $\max_{i\in \calV} \max_{\theta_1, \theta_2 \in \Theta; \theta_1\not= \theta_2}
\max_{w_i\in \calS_i}  \left |\log \frac{\ell_i(w_i|\theta_1)}{\ell_i(w_i|\theta_2)}\right |$ is symmetric in $\theta_1$ and $\theta_2$. Thus,
\begin{align}
\label{d2}
\nonumber
\left |\calL^k_r(\theta)\right |&\le \max_{i\in \calV} \max_{\theta_1, \theta_2 \in \Theta; \theta_1\not= \theta_2}
\max_{w_i\in \calS_i}  \left |\log \frac{\ell_i(w_i|\theta_1)}{\ell_i(w_i|\theta_2)}\right |=\max_{i\in \calV} \max_{\theta_1, \theta_2 \in \Theta; \theta_1\not= \theta_2}
\max_{w_i\in \calS_i}  \log \frac{\ell_i(w_i|\theta_1)}{\ell_i(w_i|\theta_2)}\\
\nonumber
&=\max_{i\in \calV} \max_{\theta_1, \theta_2 \in \Theta; \theta_1\not= \theta_2}
\max_{w_i\in \calS_i}  -\log \frac{\ell_i(w_i|\theta_2)}{\ell_i(w_i|\theta_1)}\\
&=-\min_{i\in \calV} \min_{\theta_1, \theta_2 \in \Theta; \theta_1\not= \theta_2}
\min_{w_i\in \calS_i} \log \frac{\ell_i(w_i|\theta_2)}{\ell_i(w_i|\theta_1)}=-(-C_0)=C_0.
\end{align}
By \eqref{d1} and \eqref{d2}, we get
\begin{align}
\label{d3}
\left |\calL^k_r(\theta)\right |\le C_0, ~~~ \forall \, k\in \calN.
\end{align}
Note that $\sum_{r=1}^{t} \min\{1, ~(1-\xi^{n\chi})^{\lfloor \frac{t-r}{n\chi}\rfloor-f}\}$ is a geometric series for sufficient large $t$, and is convergent. In particular,
\begin{align}
\label{d4}
\sum_{r=1}^{t} \min\{1, ~(1-\xi^{n\chi})^{\lfloor \frac{t-r}{n\chi}\rfloor-f}\}\le \sum_{r=1}^{\infty} \min\{1, ~(1-\xi^{n\chi})^{\lfloor \frac{t-r}{n\chi}\rfloor-f}\}~\triangleq~ C<\infty.
\end{align}
For each $i\in \calN$, the absolute value of the first term in the right hand side of (\ref{second}) can be bounded as follows.
\begin{align}
\label{E}
\nonumber
&\frac{1}{t}\left |\sum_{r=1}^{t} \sum_{k=1}^n \pth{{\bf \Phi}_{ik}(t,r+1)-\pi_k(r+1)}\calL^k_r(\theta)\right |\\
\nonumber
&\le\frac{1}{t}\sum_{r=1}^{t} \sum_{k=1}^n \left |{\bf \Phi}_{ik}(t,r+1)-\pi_k(r+1) \right | \left |\calL^k_r(\theta)\right |\\
\nonumber
&\le \frac{1}{t}\sum_{r=1}^{t} \sum_{k=1}^n \left |{\bf \Phi}_{ik}(t,r+1)-\pi_k(r+1) \right | C_0~\text{by \eqref{d3}}\\
\nonumber
&=\frac{1}{t}\pth{\sum_{r=1}^{t-1} \sum_{k=1}^n \left |{\bf \Phi}_{ik}(t,r+1)-\pi_k(r+1) \right | C_0+\sum_{k=1}^n \left |{\bf \Phi}_{ik}(t,t+1)-\pi_k(t+1) \right | C_0}\\
\nonumber
&\le \frac{1}{t}\pth{\sum_{r=1}^{t-1} \sum_{k=1}^n \left |{\bf \Phi}_{ik}(t,r+1)-\pi_k(r+1) \right | C_0+n C_0}~~~\text{since $\left |{\bf \Phi}_{ik}(t,r+1)-\pi_k(r+1) \right |\le 1$}\\
\nonumber
&\le \frac{1}{t}\pth{\sum_{r=1}^{t-1} n \cdot\min\{1, ~(1-\xi^{n\chi})^{\lfloor \frac{t-r}{n\chi}\rfloor-f}\} C_0 +n C_0} ~~~\text{by Proposition \ref{lc3}}\\
\nonumber
&\le \frac{1}{t}\sum_{r=1}^{t} n \cdot\min\{1, ~(1-\xi^{n\chi})^{\lfloor \frac{t-r}{n\chi}\rfloor-f}\} C_0~~~\text{since }\min\{1, ~(1-\xi^{n\chi})^{\lfloor \frac{t-t}{n\chi}\rfloor-f}\}=1\\
&\le \frac{1}{t}nCC_0 ~~~\text{by \eqref{d4}}
\end{align}
Taking limsup on both sides of (\ref{E}), we get
\begin{align*}
 \limsup_{t\diverge} \frac{1}{t}\left |\sum_{r=1}^{t} \sum_{k=1}^n \pth{{\bf \Phi}_{ik}(t,r+1)-\pi_k(r+1)}\calL^k_r(\theta)\right |& \le \limsup_{t\diverge} \frac{1}{t}nCC_0 \\
 &=0\\
 &\le  \liminf_{t\diverge} \frac{1}{t}\left |\sum_{r=1}^{t} \sum_{k=1}^n \pth{{\bf \Phi}_{ik}(t,r+1)-\pi_k(r+1)}\calL^k_r(\theta)\right |.
\end{align*}
Thus, the limit of $\frac{1}{t}\left |\sum_{r=1}^{t} \sum_{k=1}^n \pth{{\bf \Phi}_{ik}(t,r+1)-\pi_k(r+1)}\calL^k_r(\theta)\right |$ exists, and
\begin{align}
\label{conv1}
\lim_{t\diverge}\frac{1}{t}\left |\sum_{r=1}^{t} \sum_{k=1}^n \pth{{\bf \Phi}_{ik}(t,r+1)-\pi_k(r+1)}\calL^k_r(\theta)\right |~~=~~0.
\end{align}
Therefore,
\begin{align}
\label{limit tt}
\lim_{t\diverge}\frac{1}{t} \sum_{r=1}^{t} \sum_{k=1}^n \pth{{\bf \Phi}_{ik}(t,r+1)-\pi_k(r+1)}\calL^k_r(\theta) = 0,
\end{align}
i.e., the first term in the right hand side of \eqref{second} goes to 0.

Now we bound the second term in the right hand side of (\ref{second}), using Kolmogorov's strong law of large number.
Kolmogorov's strong law of large number states that if $\{X_r\}_{r=1}^{\infty}$ is a sequence of {\em independent} random variables such that $\sum_{r=1}^{\infty} \frac{Var(X_r)}{r^2}<\infty$, then
$$\frac{1}{t}\sum_{r=1}^t X_k-\frac{1}{t}\sum_{r=1}^t \mathbb{E}[X_r] ~\toas ~0.$$

For each $r$, let
\begin{align}
\label{def x}
X_r\triangleq \sum_{k=1}^n \pi_k(r+1) \calL_r^k (\theta).
\end{align}
Since the private signals $s_r^i$'s are i.i.d. across iterations, $X_r$'s are independent.
In addition, since
\begin{align*}
\expect{X_r^2}=\expect{\pth{\sum_{k=1}^n \pi_k(r+1) \calL_r^k (\theta)}^2}\le \expect{\max_{k\in \calV} |\calL_r^k (\theta)|^2}\le C_0^2,
\end{align*}
where the last inequality follows from \eqref{d3}, we get
\begin{align*}
\sum_{r=1}^{\infty} \frac{Var\pth{X_r}}{r^2}  \le \sum_{r=1}^{\infty} \frac{\expect{X_r^2}}{r^2}\le \sum_{r=1}^{\infty} \frac{1}{r^2} C_0^2 <\infty.
\end{align*}
For each $i\in \calN$ and each $k\in \calV$, by Lemma \ref{lc1}, it holds that
$\pi_k(r+1) =\lim_{t\diverge} {\bf \Phi}_{ik}(t, r+1).$ In addition,
by \eqref{p3}, for each $k\notin \calN[t+1]$, we have
\begin{align}
\label{cc0}
\pi_k(r+1) =\lim_{t\diverge} {\bf \Phi}_{ik}(t, r+1)=0.
\end{align}
Thus, we get
\begin{align}
\label{ccc}
X_r = \sum_{k=1}^n \pi_k(r+1) \calL_r^k (\theta)=\sum_{k\in \calN[t+1]} \pi_k(r+1) \calL_r^k (\theta).
\end{align}

Thus, we have, for each $i\in \calN$
\begin{align}
\label{conv2}
\nonumber
\frac{1}{t}\sum_{r=1}^{t}\pth{\sum_{k=1}^n \pi_k(r+1)\calL^k_r(\theta)- \sum_{k=1}^n\pi_k(r+1)H_k(\theta)}&=\frac{1}{t}\sum_{r=1}^{t}\pth{X_r-  \sum_{k=1}^n\pi_k(r+1)H_k(\theta)}~~~\text{by \eqref{def x}}\\
\nonumber
&=\frac{1}{t}\sum_{r=1}^{t}\pth{X_r-  \sum_{k\in \calN[r+1]}\pi_k(r+1)H_k(\theta)}~~~\text{by \eqref{cc0}}\\
\nonumber
&=\frac{1}{t}\sum_{r=1}^{t}\pth{X_r- \sum_{k\in \calN[r+1]}\pi_k(r+1)\mathbb{E}^*[\calL_r^k(\theta)]}~\text{by \eqref{expect}}\\\nonumber
&=\frac{1}{t}\sum_{r=1}^{t}\pth{X_r- \mathbb{E}^*[X_r]}~\text{by \eqref{ccc}} \\
&\toas ~0.
\end{align}
By \eqref{limit tt}, \eqref{conv2} and \eqref{second}, we obtain that
\begin{align}
\label{conv3}
\frac{1}{t}\sum_{r=1}^{t}\pth{\sum_{k=1}^n{\bf \Phi}_{ik}(t,r)L^k_r(\theta)-\sum_{k=1}^n\pi_k(r+1) H_k(\theta)} \toas 0.
\end{align}

\vskip 2\baselineskip
From \eqref{evo}, for each $i\in \calN$ and $\theta\not=\theta^*$, we get
\begin{align*}
\limsup_{t\diverge}\frac{1}{t} \bm{\psi}_{t}^i(\theta)&=\limsup_{t\diverge}{\bf \Phi}_{ik}(t,1)\bm{\psi}_0^i(\theta)+
\limsup_{t\diverge}\frac{1}{t} \sum_{r=1}^{t}\pth{\sum_{k=1}^n{\bf \Phi}_{ik}(t,r+1)\calL^k_r(\theta)- \sum_{k=1}^n\pi_k(r+1)H_k(\theta)}\\
&\quad+\frac{1}{t} \limsup_{t\diverge}\sum_{r=1}^{t}  \sum_{k=1}^n\pi_k(r+1) H_k(\theta)\\
&=0+\limsup_{t\diverge}\frac{1}{t} \sum_{r=1}^{t}\pth{\sum_{k=1}^n{\bf \Phi}_{ik}(t,r+1)\calL^k_r(\theta)- \sum_{k=1}^n\pi_k(r+1)H_k(\theta)}~~~\text{since $\bm{\psi}^i_0(\theta)=0$}\\
&\quad+ \limsup_{t\diverge}\frac{1}{t}\sum_{r=1}^{t}  \sum_{k=1}^n\pi_k(r+1) H_k(\theta)\\
&\le \lim_{t\diverge}\frac{1}{t}\sum_{r=1}^{t}\pth{\sum_{k=1}^n{\bf \Phi}_{ik}(t,r)\calL^k_r(\theta)-\sum_{k=1}^n\pi_k(r+1) H_k(\theta)}-\liminf_{t\diverge}\pth{\frac{1}{t}t\xi^{n\chi}C_1} ~~~\text{by \eqref{conv4}}\\
&=-\xi^{n\chi}C_1 ~~~\text{by \eqref{conv3}} ~~~\text{a.s.}
\end{align*}
Consequently, for each $i\in \calN$ and each $\theta\not=\theta^*$, we have
\begin{align*}
 \bm{\psi}_{t}^i(\theta)\toas -\infty. ~~~\text{and}~~~\lim_{t\diverge}\mu_t^i(\theta) \toas 0.
\end{align*}
Due to \eqref{b1} and the fact that $\tilde{\mu}_t^i(\theta)=\mu_t^i(\theta), \forall i\in \calN$, we get
\begin{align*}
\lim_{t\diverge}\mu_t^i(\theta) \toas 0.
\end{align*}

Thus, for each $i\in \calN$,
\begin{align*}
\lim_{t\diverge}\mu_t^i(\theta^*) \toas 1.
\end{align*}

Therefore, we have shown that all non-faulty agents learn the true parameter $\theta^*$ almost surely.

\eproof
\end{proof}

\begin{remark}
As it can be seen from the above analysis, our results can be trivially adapted to time-varying graphs. In addition, most of the existing results on (asymptotic and finite-time) convergence rates can be generalized to our setting.
\end{remark}
\newpage
\appendix

\setlength {\parskip}{6pt}

\centerline{\Large\bf Appendices}

\section{Equivalence of Condition 1 and Condition 2}
\label{app:equivalence}
The proof of Theorem \ref{equivalence} relies on the following proposition.
\begin{proposition}
\label{unique source}
Every subgraph $\calH^{\prime}$ of $G(\calV, \calE)$ obtained by removing up to $f$ incoming links for each $i\in \calV$ contains a unique source component.
\end{proposition}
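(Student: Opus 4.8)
The plan is to obtain the proposition from Condition~2 by contradiction; this is the hypothesis one needs, since by Theorem~\ref{equivalence} it is equivalent to Condition~1 and the uniqueness claim can fail without it. So assume Condition~2 holds, and suppose for contradiction that some subgraph $\calH^{\prime}$, obtained from $G(\calV,\calE)$ by deleting at most $f$ incoming links at every node, has two distinct source components $S_1$ and $S_2$. Since distinct strongly connected components are disjoint and nonempty, I would set $L\triangleq S_1$, $R\triangleq S_2$, and $C\triangleq \calV\setminus(S_1\cup S_2)$: this is a partition of $\calV$ with $L\neq\emptyset$ and $R\neq\emptyset$, precisely of the form appearing in Condition~2, and note that $R\cup C=\calV\setminus S_1$ and $L\cup C=\calV\setminus S_2$.

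The heart of the argument is a per-node link count. Because $S_1$ is a source component of $\calH^{\prime}$, no link of $\calH^{\prime}$ enters $S_1$ from outside, so for each $i\in S_1$ every incoming link of $i$ that survives in $\calH^{\prime}$ comes from within $S_1$. As at most $f$ of the incoming links of $i$ were removed in passing from $G$ to $\calH^{\prime}$, at least $|\calI_i|-f$ incoming neighbours of $i$ lie in $S_1$, hence $|\calI_i\cap(\calV\setminus S_1)|\le f$, that is $|\calI_i\cap(R\cup C)|\le f$. Therefore alternative (1) of Condition~2 fails for the partition $(L,R,C)$. The symmetric argument applied to $S_2=R$ gives $|\calI_j\cap(L\cup C)|=|\calI_j\cap(\calV\setminus S_2)|\le f$ for every $j\in R$, so alternative (2) fails as well; thus $(L,R,C)$ violates Condition~2, the desired contradiction. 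Hence $\calH^{\prime}$ has at most one source component, and it has at least one because the condensation digraph of $\calH^{\prime}$ is a finite acyclic digraph, so it has a vertex of in-degree zero, whose strongly connected component receives no incoming link from outside. Combining, $\calH^{\prime}$ has exactly one source component.

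There is no real obstacle in this argument; the only two points that want care are (i) checking that $(L,R,C)$ is a legitimate instance of the partition in Condition~2 — which only uses that $S_1$ and $S_2$ are disjoint and nonempty — and (ii) applying the bound ``at most $f$ incoming links removed'' at each individual node $i\in S_1$ (and $j\in S_2$) rather than in aggregate over the component. I would also remark that the same bookkeeping yields the converse direction needed for Theorem~\ref{equivalence}: if $(L,R,C)$ violates Condition~2, then deleting at each $i\in L$ its at most $f$ incoming links from $R\cup C$ and at each $j\in R$ its at most $f$ incoming links from $L\cup C$ produces a reduced graph (an $\calH^{\prime}$ with no sink removed) in which $L$ and $R$ are both closed under taking predecessors, so each contains a source component and the reduced graph has at least two.
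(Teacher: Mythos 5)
Your proof is correct and follows essentially the same route as the paper's: both assume Condition~2 and derive a contradiction by taking the two putative source components as $L$ and $R$ in the partition and applying the per-node bound of at most $f$ removed incoming links to conclude $|\calI_i\cap(R\cup C)|\le f$ and $|\calI_j\cap(L\cup C)|\le f$. Your added remark that at least one source component exists (via the condensation DAG) fills in a point the paper leaves implicit, but the core argument is identical.
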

\begin{proof}
Suppose $\calH^{\prime}$ contains at least two source components. Let $L$ and $R$ be two source components in $\calH^{\prime}$, and let $C=\calV-L-R$. It is easy to see that $L\not=\O$ and $R\not=\O$. Since $L$ is a source component, no nodes in $L$ have incoming links from $R\cup C$. In addition, by the construction of $\calH^{\prime}$, we know that in $G(\calV, \calE)$, every node in $L$ has at most $f$ incoming neighbors in $R\cup C$, i.e., $|N_i^-\cap (R\cup C)|\le f$ for each $i\in L$. Similarly, we have
$|N_j^-\cap (L\cup C)|\le f$ for each $j\in R$. This contradicts the assumption that $G(\calV, \calE)$ satisfies Condition 1.

Therefore, $\calH^{\prime}$ contains only one source component.

\eproof
\end{proof}

\begin{proof}[Proof of Theorem \ref{equivalence}]
We first show that Condition 1 implies Condition 2, i.e., if the graph $G(\calV, \calE)$ satisfies Condition 1, then by Proposition \ref{unique source}, every reduced graph $\calH$ of $G(\calV, \calE)$, defined as per Definition \ref{reduced graph}, has only one source component.

Let $\calH^{\prime}$ be an arbitrary subgraph of $G(\calV, \calE)$ obtained by removing up to $f$ incoming links for each node. From Definition \ref{reduced graph}, it can be seen that to show Condition 1 implies Condition 2, it is enough to show that every reduced graph obtained from $\calH^{\prime}$ contains only one source component.

From Definition \ref{reduced graph}, we know that when $\calH^{\prime}$ contains no sink nodes, $\calH^{\prime}$ is the only reduced graph associated with $\calH^{\prime}$. By Proposition \ref{unique source}, $\calH^{\prime}$ contains only one source component.
Similarly, from Definition \ref{reduced graph}, when $\calH^{\prime}$ contains at least one sink node -- a node that does not have any outgoing links, multiple reduced graphs can be obtained from $\calH^{\prime}$. Let $\calH$ be an arbitrary reduced graph associated with $\calH^{\prime}$ such that $\calH \not=\calH^{\prime}$, with $\{i_1, \ldots, i_p\}$ being the $p$ (where $1\le p\le f$) sink nodes removed from $\calH^{\prime}$.
The uniqueness of source component in $\calH^{\prime}$ implies that there exists a node that can reach any other node in $\calH^{\prime}$. That is, there exists a node $j$ such that for any $i\in \calV$, $i\not=j$, there is a $j,i$--path in $\calH^{\prime}$.
Since a sink node does not have any outgoing link, any intermediate node of a $j,i$--path in $\calH^{\prime}$ cannot be a sink node. Then, if $i\notin \{i_1, \ldots, i_p\}$, any $j,i$--path in $\calH^{\prime}$ is remained in $\calH$.
Thus, node $j$ can reach any node $i\notin \{i_1, \ldots, i_p\}$ in $\calH$, proving $\calH$ contains only one source component. By the arbitrariness of $\calH$, we conclude that every reduced graph obtained from $\calH^{\prime}$ contains only one source component.

Since $\calH^{\prime}$ is also chosen arbitrarily, therefore, we conclude that Condition 1 implies Condition 2.  \\

Next we show that Condition 2 implies Condition 1, i.e., if every reduced graph of $G(\calV, \calE)$ contains only one source component, then $G(\calV, \calE)$ satisfies Condition 1. We prove this by contradiction.
Suppose $G(\calV, \calE)$ satisfies Condition 2, but does not satisfy Condition 1, i.e., there exists a node partition $L, R, C$ such that $L\not=\O$, $R\not=\O$, and the following holds:
for $i\in L$, $|N_i^-\cap \pth{R\cup C}|\le f$, and for $j\in R$, $|N_j^-\cap \pth{L\cup C}|\le f$.
Now, consider the reduced graph constructed as follows: for each $i\in L$, remove all incoming links from $R\cup C$, and for each $j\in R$, remove all incoming links form $L\cup C$. Denote the obtained subgraph as $\calH$. From Definition \ref{reduced graph}, we know that $\calH$ is a reduced graph. Since nodes in $L$ do not have any incoming link from $R\cup C$, and nodes in $R$ do not have any incoming link from $L\cup C$, thus both $L$ and $R$ contain source components. Consequently, there are at least two source components in $\calH$, contradicting the fact that $\calH$ contains only one source component. Thus, we conclude that Condition 2 implies Condition 1. \\

Therefore, Condition 1 and Condition 2 are equivalent.

\eproof \end{proof}

\section{Proof of Proposition \ref{lower bound}}
\label{app:lower bound}
\begin{proof}
We consider two cases separately: (1) every agent participates in iteration $t$, i.e., $\calN[t]=\calV$; and (2) there exists an agent that does not participate in iteration $t$, i.e., $\calV-\calN[t]\not=\O$.
\vskip 0.5\baselineskip

\noindent{\bf Case (1):} Suppose every agent participates in iteration $t$, i.e., $\calN[t]=\calV$. By the construction of matrix ${\bf A}[t]$ in (\ref{matrix 1})  and (\ref{matrix 2}), for each $i\in \calV$, ${\bf A}_{ij}[t]\ge \xi$ for $|N_i^-|-f+1$ incoming neighbors of node $i$.
This corresponds to the operation of removing $f$ incoming links of node $i$.
Thus, there exists a reduced graph $\calH[t]$ with adjacency matrix ${\bf H}[t]$ such that
$${\bf A}[t]\ge \xi {\bf H}[t].$$

\noindent{\bf Case (2):} Suppose there exists agent that does not participate iteration $t$, i.e., $\calV-\calN[t]\not=\O$. In this case, since each agent in $\calV-\calN[t]$ has already crashed during the first $t-1$ iterations, it will no longer send out messages as well as receiving messages. That is, nodes in $\calV-\calN[t]$ are isolated nodes in iteration $t$. By construction of ${\bf A}[t]$ in (\ref{matrix 1})  and (\ref{matrix 2}), we know that  ${\bf A}_{ij}[t]=0$ for each $j\in \calV-\calN[t]$, where $i\not=j$.
Intuitively speaking, this corresponds to the fact that a crashed agent cannot influence any other agent, and thus is a sink node.
In addition, the fact that ${\bf A}_{ji}[t]=0$ for any $j\not=i$ and $j\in \calV-\calN[t]$ corresponds to the operation that the sink nodes in $\calV-\calN[t]$ are removed. Therefore, there exists
a reduced graph $\calH[t]$ with adjacency matrix ${\bf H}[t]$ such that
$${\bf A}[t]\ge \xi {\bf H}[t].$$

Therefore, the proof of Proposition \ref{lower bound} is complete.

\eproof
\end{proof}

\section{Proofs of Lemma \ref{lc1}, Proposition \ref{lc3} and Lemma \ref{lc2}}
\label{app:lc}
\begin{proof}[Proof of Lemma \ref{lc1}]
Since $\reals$ is complete, it is enough to show that $\{{\bf \Phi}_{ij}(t, r)\}_{t=r}^{\infty}$ is a Cauchy sequence. That is, we need to show that for any $\epsilon>0$, there exists $t^*\ge r$ such that for any $t\ge t^*$ and any $p\in \naturals$
$$\left | {\bf \Phi}_{ij}(t+p, r) - {\bf \Phi}_{ij}(t, r)  \right | <\epsilon ,~~\forall \, i\in \calN, \forall\, j\in \calV.$$

By Theorem \ref{bound}, we know for any $\epsilon>0$, there exists $t^*\ge r$ such that for all $t\ge t^*$
\begin{align}
\label{cc1}
\left | {\bf \Phi}_{ij}(t,r)-{\bf \Phi}_{kj}(t,r)\right |<\epsilon, ~\forall \,  i, k\in \bar{\calN}[t], \forall\, j\in \calV.
\end{align}
For any $p\in \naturals$, we have
\begin{align*}
\left | {\bf \Phi}_{ij}(t+p,r)-{\bf \Phi}_{ij}(t,r)\right |&=\left |\sum_{k=1}^n{\bf \Phi}_{ik}(t+p,t+1){\bf \Phi}_{kj}(t,r) -{\bf \Phi}_{ij}(t,r)\right |\\
&=\left |\sum_{k=1}^n{\bf \Phi}_{ik}(t+p,t+1)\pth{{\bf \Phi}_{kj}(t,r) -{\bf \Phi}_{ij}(t,r)}\right |~~~\text{since }\sum_{k=1}^n{\bf \Phi}_{ik}(t+p,t+1)=1\\
&=\left |\sum_{k\in \calN[t+1]}{\bf \Phi}_{ik}(t+p,t+1)\pth{{\bf \Phi}_{kj}(t,r) -{\bf \Phi}_{ij}(t,r)}\right |~~~\text{by Proposition \ref{p3}}\\
&\le \sum_{k\in \calN[t+1]}{\bf \Phi}_{ik}(t+p,t+1)\left|{\bf \Phi}_{kj}(t,r) -{\bf \Phi}_{ij}(t,r)\right |\\
&=\sum_{k\in \bar{\calN}[t]}{\bf \Phi}_{ik}(t+p,t+1)\left|{\bf \Phi}_{kj}(t,r) -{\bf \Phi}_{ij}(t,r)\right |\\
&\le \pth{\sum_{k\in \bar{\calN}[t]}{\bf \Phi}_{ik}(t+p,t+1)}\cdot \epsilon~~~\text{by \eqref{cc1}}\\
&=\epsilon.
\end{align*}
Thus, there exists $\bm{\bm{\pi}}(r)$ such that \eqref{limit} holds. Since ${\bf \Phi}(t, r)$ is row-stochastic, the limiting vector $\bm{\bm{\pi}}(r)$ is also stochastic, proving the lemma.

\eproof
\end{proof}

\begin{proof}[Proof of Proposition \ref{lc3}]
We now show that $ {\bf \Phi}_{ik}(t,r)-\pi_k(r)\le \min\{~1,~ (1-\xi^{n\chi})^{\lfloor \frac{t-r+1}{n\chi}\rfloor -f}\},$ for $i\in \calN$ and $k\in \calV$.
For any $p\in \naturals$,
\begin{align}
\nonumber
{\bf \Phi}_{ik}(t,r)-\pi_k(r)&={\bf \Phi}_{ik}(t,r)-\lim_{p\diverge }{\bf \Phi}_{ik}(t+p,r)~~~\text{by Lemma \ref{lc1}}\\
\nonumber
&=\lim_{p\diverge }\pth{{\bf \Phi}_{ik}(t,r)-{\bf \Phi}_{ik}(t+p,r)}\\
\nonumber
&=\lim_{p\diverge }\pth{{\bf \Phi}_{ik}(t,r)-\sum_{j=1}^n{\bf \Phi}_{ij}(t+p,t+1){\bf \Phi}_{jk}(t, r) }\\
\nonumber
&=\lim_{p\diverge }\sum_{j=1}^n{\bf \Phi}_{ij}(t+p,t+1)\pth{{\bf \Phi}_{ik}(t,r)-{\bf \Phi}_{jk}(t, r) }~~~\text{since }\sum_{j=1}^n{\bf \Phi}_{ij}(t+p,t+1)=1\\
\nonumber
&=\lim_{p\diverge }\sum_{j\in \bar{\calN}[t]}{\bf \Phi}_{ij}(t+p,t+1)\pth{{\bf \Phi}_{ik}(t,r)-{\bf \Phi}_{jk}(t, r) }\\
\nonumber
&\quad+\lim_{p\diverge }\sum_{j\notin \bar{\calN}[t]}  {\bf \Phi}_{ij}(t+p,t+1)\pth{{\bf \Phi}_{ik}(t,r)-{\bf \Phi}_{jk}(t, r) }  \\
&=\lim_{p\diverge }\sum_{j\in \bar{\calN}[t]}{\bf \Phi}_{ij}(t+p,t+1)\pth{{\bf \Phi}_{ik}(t,r)-{\bf \Phi}_{jk}(t, r) }~~~\text{by Proposition \ref{p3}} \label{XX}\\
\nonumber
&\le \sum_{j\in \bar{\calN}[t]}{\bf \Phi}_{ij}(t+p,t+1)\cdot \min\{~1,~ (1-\xi^{n\chi})^{\lfloor \frac{t-r+1}{n\chi}\rfloor -f}\}~~~\text{by Theorem \ref{bound} and $\bar{\calN}[t]=\calN[t+1]$}\\
\nonumber
&=\min\{~1,~ (1-\xi^{n\chi})^{\lfloor \frac{t-r+1}{n\chi}\rfloor -f}\}~~~\text{by \eqref{sum}}
\end{align}
Similarly, we get
\begin{align*}
{\bf \Phi}_{ik}(t,r)-\pi_k(r)&=\lim_{p\diverge }\sum_{j\in \bar{\calN}[t]}{\bf \Phi}_{ij}(t+p,t+1)\pth{{\bf \Phi}_{ik}(t,r)-{\bf \Phi}_{jk}(t, r) }~~~\text{as per \eqref{XX}}\\
&\ge \sum_{j\in \bar{\calN}[t]}{\bf \Phi}_{ij}(t+p,t+1)\cdot \pth{-\min\{~1,~ (1-\xi^{n\chi})^{\lfloor \frac{t-r+1}{n\chi}\rfloor -f}\}}~~~\text{by Theorem \ref{bound}}\\
&=-\min\{~1,~ (1-\xi^{n\chi})^{\lfloor \frac{t-r+1}{n\chi}\rfloor -f}\}.
\end{align*}

The proof of Proposition \ref{lc3} is complete.
\eproof
\end{proof}

In addition, the limit vector $\bm{\pi}(r)$ has the following property.
\begin{proof}[Proof of Lemma \ref{lc2}]
For each $j \in \calV$, we have
\begin{align*}
\pi_j(r)&=\lim_{t\diverge} {\bf \Phi}_{ij}(t,r)\\
&=\lim_{t\diverge} \sum_{k=1}^n{\bf \Phi}_{ik}(t,r+n\chi) {\bf \Phi}_{kj}(r+n\chi-1, r)\\
&=\lim_{t\diverge} \pth{\sum_{k\in \calN[r+n\chi]}{\bf \Phi}_{ik}(t,r+n\chi) {\bf \Phi}_{kj}(r+n\chi-1, r)+\sum_{k\notin \calN[r+n\chi]}{\bf \Phi}_{ik}(t,r+n\chi) {\bf \Phi}_{kj}(r+n\chi-1, r)}\\
&=\lim_{t\diverge} \sum_{k\in \calN[r+n\chi]}{\bf \Phi}_{ik}(t,r+n\chi) {\bf \Phi}_{kj}(r+n\chi-1, r).
\end{align*}
From the proof of Theorem  \ref{bound}, we know that there exists a reduced graph $\tilde{\calH}[r]$ with source component $S_r$ such that for each $j\in S_r$,
$${\bf \Phi}_{kj}(r+n\chi-1, r)\ge \xi^{n\chi}, ~~~\forall \, k \in \bar{\calN}[r+n\chi -1].$$
Thus, for each $j\in S_r$ we have
\begin{align*}
\pi_j(r)&=\lim_{t\diverge} \sum_{k\in \calN[r+n\chi]}{\bf \Phi}_{ik}(t,r+n\chi) {\bf \Phi}_{kj}(r+n\chi-1, r)\\
&\ge \lim_{t\diverge} \pth{\sum_{k\in \calN[r+n\chi]}{\bf \Phi}_{ik}(t,r+n\chi) }~\xi^{n\chi}\\
& =\xi^{n\chi} ~~~\text{by Proposition \ref{p3} }.
\end{align*}
Recall that $\gamma$ is the minimum source component size in all reduced graph. So $|S_r|\ge \gamma$. The proof is complete.

\eproof
\end{proof}

\bibliographystyle{abbrv}

\begin{thebibliography}{10}
%
%
%
%
%
%
%
%
%
%
%
%
%
%
%
\bibitem{chamberland2003decentralized}
J.-F. Chamberland and V.~V. Veeravalli.
\newblock Decentralized detection in sensor networks.
\newblock {\em  {IEEE} Transactions on Signal Processing}, 2003.

%
%
%
%
%
%
%
%
%
\bibitem{gale2003bayesian}
D.~Gale and S.~Kariv.
\newblock Bayesian learning in social networks.
\newblock {\em Games and Economic Behavior}, 2003.

%
%
%
%
%
%
\bibitem{jadbabaie2012non}
A.~Jadbabaie, P.~Molavi, A.~Sandroni, and A.~Tahbaz-Salehi.
\newblock Non-bayesian social learning.
\newblock {\em Games and Economic Behavior}, 2012.
%
%
%
%
%
%
%
%
%
%
%
%
%
%
\bibitem{levy2008principles}
B.~C. Levy.
\newblock {\em Principles of Signal Detection and Parameter Estimation}.
\newblock Springer Science \& Business Media, 2008.

%
%
%
%
%
%
%
%
%
%
%
%
%
%
%
%
%
%
%
%
%
%
%
%
%
%
%
\bibitem{Tsitsiklis1988}
J.~Tsitsiklis.
\newblock Decentralized detection by a large number of sensors.
\newblock {\em Mathematics of Control, Signals and Systems},
  1988.

%
\bibitem{tsitsiklis1993decentralized}
J.~N. Tsitsiklis.
\newblock Decentralized detection.
\newblock {\em Advances in Statistical Signal Processing}, 1993.

%
%
%
%
\bibitem{varshney2012distributed}
P.~K. Varshney.
\newblock {\em Distributed Detection and Data Fusion}.
\newblock Springer Science \& Business Media, 2012.
%
%
\bibitem{wong2012stochastic}
E.~Wong and B.~Hajek.
\newblock {\em Stochastic Processes in Engineering Systems}.
\newblock Springer Science \& Business Media, 2012.
%
%
%
%
%
%


\bibitem{attiya2004distributed}
H.~Attiya and J.~Welch.
\newblock {\em Distributed computing: fundamentals, simulations, and advanced
  topics}, volume~19.
\newblock John Wiley \& Sons, 2004.

\bibitem{Hajnal58}
J.~Hajnal and M.~Bartlett.
\newblock Weak ergodicity in non-homogeneous markov chains.
\newblock In {\em Mathematical Proceedings of the Cambridge Philosophical
  Society}, volume~54, pages 233--246. Cambridge Univ Press, 1958.

\bibitem{jadbabaie2013information}
A.~Jadbabaie, P.~Molavi, and A.~Tahbaz-Salehi.
\newblock Information heterogeneity and the speed of learning in social
  networks.
\newblock {\em Columbia Business School Research Paper}, (13-28), 2013.

\bibitem{Lalitha2014}
A.~Lalitha, A.~Sarwate, and T.~Javidi.
\newblock Social learning and distributed hypothesis testing.
\newblock In {\em Information Theory (ISIT), 2014 IEEE International Symposium
  on}, pages 551--555, June 2014.

\bibitem{Lynch:1996:DA:525656}
N.~A. Lynch.
\newblock {\em Distributed Algorithms}.
\newblock Morgan Kaufmann Publishers Inc., San Francisco, CA, USA, 1996.

\bibitem{molavi2015foundations}
P.~Molavi and A.~Jadbabaie.
\newblock Foundations of non-bayesian social learning.
\newblock {\em Columbia Business School Research Paper}, 2015.

\bibitem{nedic2014nonasymptotic}
A.~Nedi{\'c}, A.~Olshevsky, and C.~A. Uribe.
\newblock Nonasymptotic convergence rates for cooperative learning over
  time-varying directed graphs.
\newblock {\em arXiv preprint arXiv:1410.1977}, 2014.

\bibitem{rad2010distributed}
K.~R. Rad and A.~Tahbaz-Salehi.
\newblock Distributed parameter estimation in networks.
\newblock In {\em IEEE Conference on Decision and Control (CDC)}, pages
  5050--5055. IEEE, 2010.

\bibitem{shahrampour2013exponentially}
S.~Shahrampour and A.~Jadbabaie.
\newblock Exponentially fast parameter estimation in networks using distributed
  dual averaging.
\newblock In {\em IEEE Conference on Decision and Control (CDC)}, pages
  6196--6201. IEEE, 2013.

\bibitem{shahrampour2014distributed}
S.~Shahrampour, A.~Rakhlin, and A.~Jadbabaie.
\newblock Distributed detection: Finite-time analysis and impact of network
  topology.
\newblock {\em arXiv preprint arXiv:1409.8606}, 2014.

\bibitem{shahrampour2015finite}
S.~Shahrampour, A.~Rakhlin, and A.~Jadbabaie.
\newblock Finite-time analysis of the distributed detection problem.
\newblock {\em arXiv preprint arXiv:1512.09311}, 2015.



\bibitem{tseng2015thesis}
L.~Tseng.
\newblock Fault-tolerant Consensus and Shared Memory Consistency Model.
\newblock Ph.D dissertation {\em University of Illinois at Urbana-Champaign}, 2015.



\bibitem{wolfowitz1963products}
J. ~Wolfowitz.
\newblock Products of indecomposable, aperiodic, stochastic matrices.
\newblock In {\em Proceedings of the American Mathematical Society}, pages 733--737. JSTOR, 1963.


\end{thebibliography}

\end{document}